\begin{document}
\title{Significant inference and confidence sets for graphical models }
\author{ Koldanov P.A.\footnote{pkoldanov@hse.ru}, Koldanov A.P.
\\
NRU HSE, Nizhny Novgorod, Russia  
}
\date{}
\newtheorem{definition}{Definition}[section]
\newtheorem{teo}{Theorem}[section]
\newtheorem{lemma}{Lemma}[section]
\newtheorem{example}{Example}[section]
\newtheorem{corollary}{Corollary}[section]
\newtheorem{note}{Note}[section]
\maketitle
\abstract{}

The problem of identifying statistically significant inferences about
the structure of the graphical model is considered, along with the related
task of constructing a confidence set for a graphical model. It has been
proven that the procedure for constructing such set is equivalent
to the procedure for simultaneous testing of hypotheses and alternatives
regarding the composition of the graphical model. Some variants of the
simultaneous testing of hypotheses and alternatives are discussed. It is
shown that under the condition of free combination of hypotheses and
alternatives, a simple generalization of the closure method leads to singlestep
procedures for simultaneous testing of hypotheses and alternatives.
The structure of the confidence set for the graphical model is analyzed,
demonstrating how the confidence set leads to a separation of inferences
about the graphical model into statistically significant and insignificant
categories, or into an area of uncertainty. General results are detailed
by analyzing confidence sets for undirected
Gaussian graphical model selection. Examples are provided that illustrate
the separation of inferences about the composition of undirected Gaussian
graphical models into significant results and areas of uncertainty, and a
comparison is made with known results obtained using the SINful approach
to undirected Gaussian graphical model selection.

\noindent {\bf Keywords}
Graphical models, confidence sets, upper and lower confidence bounds, statistically significant inference, zone of uncertainty, multiple hypotheses testing procedures, Gaussian graphical model selection.

\section{Introduction}\label{introduction}

It is well known that the problem of selecting graphical models receives significant attention. At the same time, the investigation of issues related to uncertainty control and the extraction of statistically significant conclusions has only recently been initiated \cite{Drton2004}, \cite{Drton2008}, \cite{Kalyagin_2020}, \cite{Koldanov_2023}, \cite{Wang2021}.

The problem of dividing inferences into statistically significant and indeterminate categories in the context of undirected Gaussian graphical model selection was first posed in \cite{Drton2004}, \cite{Drton2008}, where the SINful approach was proposed. It is well known that a Gaussian graphical model is understood as a graph \( G = (V, E) \) with \( V = \{1, \ldots, N\} \) and \( E = \{(i,j) : \rho^{i,j} \neq 0\} \), where \( \rho^{i,j} \) is the partial correlation coefficient. The SINful approach is based on the simultaneous testing of hypotheses \( h_{i,j} : \rho^{i,j} = 0 \), which allows constructing a subgraph whose set of edges, with a specified probability, is contained within the set of edges of the true Gaussian graphical model. The SINful approach involves partitioning the set of computed p-values into three parts: S, I, N, where S is the set of statistically significant conclusions regarding the falsehood of hypotheses about zero partial correlation coefficients; I is the set of indeterminate conclusions; and N is the set of large p-values corresponding to edges that should be excluded. However, only conclusions from the set S can be considered statistically significant, since the construction of N relies on a heuristic threshold setting, as noted by the authors of these works. Therefore, whether conclusions from the set N are statistically significant or indeterminate remains an open question. Using the results obtained in this work, it is possible to provide an answer to this question. Additionally, in \cite{Drton2004}, \cite{Drton2008}, interesting examples are presented demonstrating applications of the SINful approach to dividing inferences into S, I, N categories, which are analyzed in detail in this work.

In \cite{Koldanov_2023} a method for constructing upper and lower confidence bounds for the edges of a threshold similarity graph with a fixed threshold is proposed. It is shown that such bounds lead to a division of inferences about the structure of the threshold similarity graph into statistically significant and insignificant (indeterminate) categories. For this division, it is suggested to simultaneously test hypotheses about the presence of an edge and hypotheses about its absence. It is demonstrated that applying the procedure of simultaneous hypothesis testing and its alternatives results in the formation of three sets: a set of statistically significant edges, a set of statistically significant pairs of vertices not connected by an edge (non-adjacent vertices), and a set of statistically insignificant inferences, i.e., pairs of vertices for which current observations do not allow making a statistically significant conclusion about the presence or absence of an edge between them. It is shown that, with a given probability, the set of statistically significant edges is contained within the true edge set of the threshold similarity graph, and simultaneously, the true edge set is contained within the union of the set of statistically significant edges and the set of vertex pairs for which current observations do not permit a statistically significant conclusion about edge presence or absence. These sets are referred to in \cite{Koldanov_2023} as lower and upper confidence bounds for the true edge set of the threshold similarity graph. However, the approach proposed in \cite{Koldanov_2023} which involves simultaneous testing of one-sided hypotheses and their alternatives is limited according to the definition of the threshold similarity graph and does not align with many well-known graphical models, particularly Gaussian graphical models.

In \cite{Wang2021}, a procedure is proposed for constructing upper and lower graphs such that the true graphical model lies between them with a specified probability. Similar to \cite{Koldanov_2023}, the upper graph is understood as a graph whose set of edges includes the set of true edges, while the lower graph is a graph whose set of edges is contained within the true edge set with a given probability. The uncertainty in the model selection procedure (also as in \cite{Koldanov_2023}) is measured by the difference between the upper and lower graphs.
However, the connection between confidence sets and statistically significant inferences has not been explored. Additionally, the procedure for constructing the confidence set is based on bootstrap techniques. In the present work, the problem of constructing confidence sets for Gaussian graphical models and the related task of identifying statistically significant inferences are considered from classical perspectives, similar to \cite{Drton2004} and \cite{Drton2008}.

In this work, the approach proposed in \cite{Koldanov_2023} is extended to construct confidence sets for more general types of graphical models, including both the problem of threshold similarity graph selection and undirected Gaussian graphical model selection. Similar considerations apply to covariance graph models, which are defined in terms of marginal independence rather than conditional independence. This development is based on new proofs of certain results from \cite{Koldanov_2023}, which, in our opinion, better highlight the connection between the task of distinguishing statistically significant and insignificant conclusions about the structure of graphical models and the problem of simultaneous hypothesis testing and alternative hypotheses of a more general form.
Moreover, it is shown that the procedure for dividing conclusions about the truth or falsehood of hypotheses and alternatives tested simultaneously into statistically significant and insignificant ones is equivalent to constructing two index sets such that, with a given probability, the set of indices corresponding to true hypotheses includes the set of rejected alternatives and is simultaneously contained within the set of accepted hypotheses.
The problem of constructing procedures for simultaneous hypothesis testing and alternatives is considered from the perspective of classical multiple testing and simultaneous inference theory \cite{Lehmann2005}, \cite{Hochberg_1987}. It is demonstrated that, under the condition of free combination of hypotheses and alternatives, a simple extension of the closure method leads to single-step procedures.
Upper and lower confidence bounds are constructed, which define a confidence set for the graphical model. The lower confidence bound consists of pairs of vertices between which no edge is present in all graphs within the confidence set (the set of non-adjacent vertices), while the complement to the upper confidence bound consists of pairs of vertices between which an edge is present in all graphs within this set (the set of adjacent vertices). The difference (gap) between the upper and lower confidence bounds represents an area of uncertainty and characterizes the size of the confidence set. The lower confidence bound and its complement to the upper bound define sets corresponding to statistically significant conclusions about the structure of the graphical model.
General results are detailed for constructing confidence sets for undirected Gaussian graphical model selection. Examples are provided illustrating how conclusions about the structure of an undirected Gaussian graphical model can be separated into significant parts and an area of uncertainty, along with a comparison of these conclusions with known results \cite{Drton2004}, \cite{Drton2008} obtained using the SINful approach to undirected Gaussian graphical model selection.

The article is organized as follows:  
Section \ref{Basic_definitions_and_notations} presents the main definitions and problem formulation;  
Section \ref{conf_set_mult_test_sign_inference} demonstrates the connection between confidence sets, procedures for simultaneous testing of hypotheses and alternatives that control the Family-Wise Error Rate (FWER), and the division of conclusions into significant and insignificant;  
Section \ref{conf_set_construction} discusses possible approaches to constructing procedures for simultaneous testing of hypotheses and alternatives, leading to the formation of confidence sets for graphical models;  
Section \ref{gaussian_graph_model} addresses the problem of Gaussian graphical model selection;  
Section \ref{conclusion} provides the conclusion;  
In the appendix (Section \ref{Appendix}), all proofs are included.

\section{Basic definitions and problem statement}\label{Basic_definitions_and_notations} 

Let $(X_1,\ldots,X_N)'$ be a random vector with distribution $f(y;\theta),y\in R^N, \theta\in\Theta, \Theta\subseteq R^M$, $\theta=(\theta_1,\ldots,\theta_M)$. 

The dependence structure between the components of the vector \((X_1, \ldots, X_N)'\) can be conveniently represented using a graphical model or a graph \(G = (V, E)\), where \(V = \{1, \ldots, N\}\) is the set of vertices, and \(E\) is the set of edges.  
One of the most well-known graphical models is the Gaussian graphical model. In this case, it is assumed that the distribution of the vector \((X_1, \ldots, X_N)'\) is multivariate normal. An undirected edge between vertices \(i\) and \(j\) is added to the graph \(G\) if and only if \(X_i\) and \(X_j\) are conditionally dependent, i.e., their partial correlation coefficient \(\rho^{i,j} \neq 0\).  
In this context, \(\Theta = \{ (\rho^{i,j})_{N \times N} \}\) is the set of symmetric positive definite matrices.
The specification of a Gaussian graphical model is equivalent to defining the set \(\{(i,j): \rho^{i,j} = 0\}\).  
To simplify notation, instead of the parameters \(\rho^{i,j}\) for \(i,j=1,\ldots,N\), we will use parameters \(\theta_1, \ldots, \theta_M\), where \(M = \frac{N(N-1)}{2}\). Specifically,  
\[
\theta_{l} = \rho^{i,j} \mbox{ when } l = N(i-1) - \frac{1}{2} j (j-1) + i - j.
\]  
Similarly, other types of graphical models can be considered, such as covariance graph models (defined in terms of marginal independence rather than conditional independence), threshold similarity graphs, and so on.

Let $X=(X_i(s)),i=1,\ldots,N;s=1,\ldots,n$ be a sample of size $n, n>N$.
The problem considered in this work is to construct a confidence set of graphs $S(X)$, i.e. a set $S(X)$, satisfying the condition  
\begin{equation}\label{conf_set_GGM}
P_G(S(X)\supset G)\geq P^{\star},\ \ \forall G\in {\cal G}
\end{equation}
where $G$ is the true graphical model, ${\cal G}$ is the set of all possible graphical models, and $0<P^{\star}<1$ is a specified value.

Any undirected graph can be represented by a symmetric adjacency matrix \(A_{N \times N} = (a_{i,j})\). Thus, the set of graphs included in \(S(X)\) can be described by a set of symmetric adjacency matrices. Any adjacency matrix can be specified by defining its elements.

Let us introduce the following notations. 

$J=\{i; i=1,\ldots,M\}$ the set of parameter indices $\theta_1,\ldots,\theta_M$,

$J_t(\theta)$ - set of indices such that $\theta_i\in \Theta_i,i=1,\ldots,M$

$J_f(\theta)$ - set of indices such that $\theta_i\in\Theta_i^c,i=1,\ldots,M$, 

where $\Theta_i\subset R^1$, $\Theta_i,\Theta_i^c$ is a partition of $\Theta$.
In particular, in the task of constructing a Gaussian graphical model, \(\Theta_i = \{0\}\).  
It is evident that the sets \(J_{t}(\theta)\) and \(J_{f}(\theta)\) form a partition of the set \(J\) for any \(\theta \in \Theta\).



Given that specifying an undirected graph is equivalent to defining its set of edges, the problem of constructing a confidence set \(S(X)\) can be formulated as the task of constructing two index sets \(L(X)\) and \(U(X)\), satisfying the condition
\begin{equation}\label{simultaneous_upper_low_bound}
P_{\theta}(L(X)\subseteq J_t(\theta)\subseteq U(X)\geq P^{\star},\ \ \forall\theta\in\Theta
\end{equation} 

It is evident that the task of constructing a confidence set \(S(X)\) can be equivalently formulated as the task of constructing two index sets \(L(X)\) and \(U(X)\) for \(J_f(\theta)\).  
In \cite{Koldanov_2023}, such types of sets are referred to as lower and upper confidence bounds for the set of edges, while in \cite{Wang2021} they are called small and large graphs. 


In this work, we consider the formulation given by (\ref{simultaneous_upper_low_bound}).  
Applying this to the problem of identifying a Gaussian graphical model, the set \(L(X)\) represents pairs of vertices \((i,j)\) for which no edge is present in all graphs within the confidence set \(S(X)\); that is, the element \(a_{i,j} = 0\) in all corresponding adjacency matrices.  
The set \(\overline{U}(X) = J \setminus U(X)\) consists of pairs \((i,j)\) for which an edge is present in all graphs within \(S(X)\); that is, \(a_{i,j} = 1\) in all corresponding adjacency matrices.  
The set \(U(X) \setminus L(X)\) contains pairs \((i,j)\) such that graphs with and without the edge \((i,j)\) both belong to \(S(X)\). In other words, these are pairs where both \(a_{i,j} = 0\) and \(a_{i,j} = 1\) are possible within the confidence set.  
Therefore, the total number of graphs (or admissible graphical models) within \(S(X)\) is  
\[
2^{|U(X) \setminus L(X)|},
\]
where \(|A|\) denotes the number of elements in the set \(A\). 
\section{Confidence set, multiple testing and significant inferences}\label{conf_set_mult_test_sign_inference}
 
To construct the confidence set \(S(X)\), we propose simultaneously testing the hypotheses \(h_i: \theta_i \in \Theta_i\), \(i=1,\ldots,M\), and the alternatives \(k_i: \theta_i \in \Theta_i^c = \Theta \setminus \Theta_i\), where \(\Theta_i \subseteq R^1\). 


Suppose that the tests \(\varphi_1, \varphi_2, \ldots, \varphi_M\) and \(\psi_1, \psi_2, \ldots, \psi_M\), which test the hypotheses \(h_i\) and the alternatives \(k_i\), are known at an arbitrary significance level \(0 < \alpha < 1\).

\begin{lemma}\label{upper_set_lemma_new}
If the procedure \((\varphi_1, \varphi_2, \ldots, \varphi_M)\), which performs simultaneous testing of hypotheses \(h_i,\ i=1,\ldots,M\), controls the family-wise error rate \(FWER \leq \alpha\) in the strong sense, then
\begin{equation}\label{upper_set_lemma_new_equation}
P_{\theta}(U(X)\supseteq J_t(\theta))\geq 1-\alpha,\ \ \forall\theta\in\Theta
\end{equation}
where 
$U(x)=\{i:\varphi_i(x)=0\},\ \ i=1,\ldots,M$ is the set of indices of accepted hypotheses $h_i,\ \ i=1,\ldots,M$.
\end{lemma}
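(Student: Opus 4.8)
The plan is to translate the set inclusion $U(X) \supseteq J_t(\theta)$ into the language of correct and incorrect testing decisions, and then read off the conclusion directly from the definition of strong FWER control. First I would fix the decision convention implicit in the notation: $\varphi_i(x) = 1$ means the hypothesis $h_i$ is rejected and $\varphi_i(x) = 0$ means it is accepted, so that $U(x) = \{i : \varphi_i(x) = 0\}$ is exactly the set of indices of accepted hypotheses. Likewise $J_t(\theta) = \{i : \theta_i \in \Theta_i\}$ is the set of indices for which $h_i$ is \emph{true}.

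The key observation is that the inclusion $U(X) \supseteq J_t(\theta)$ holds precisely when every true hypothesis is accepted. I would verify this by examining the complementary event: $U(X) \not\supseteq J_t(\theta)$ means there exists an index $i \in J_t(\theta)$ with $i \notin U(X)$, that is, an $i$ for which $h_i$ is true ($\theta_i \in \Theta_i$) yet $\varphi_i(X) = 1$ (rejected). This is exactly the event that at least one true hypothesis is falsely rejected, which is the event whose probability defines the family-wise error rate.

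Having made this identification, the conclusion follows by taking complements. Strong control of the FWER guarantees the bound uniformly over all $\theta \in \Theta$ (including configurations in which some of the $h_i$ are false), so for every $\theta$
\begin{equation}\label{fwer_identity}
P_{\theta}(U(X)\not\supseteq J_t(\theta))=P_{\theta}\bigl(\exists\, i\in J_t(\theta):\varphi_i(X)=1\bigr)\leq\alpha .
\end{equation}
Therefore $P_{\theta}(U(X)\supseteq J_t(\theta))=1-P_{\theta}(U(X)\not\supseteq J_t(\theta))\geq 1-\alpha$ for all $\theta\in\Theta$, which is the claimed inequality \eqref{upper_set_lemma_new_equation}.

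I do not anticipate a substantive obstacle: the statement is essentially a verbatim restatement of strong FWER control in set-theoretic form. The only point demanding care is matching the acceptance/rejection convention correctly to the definitions of $U(X)$ and $J_t(\theta)$, so that the complementary event in \eqref{fwer_identity} is identified with a type I error (rejecting a true hypothesis) rather than a type II error, and so that strong rather than weak control is invoked, which is what licenses the bound for every $\theta$ at once.
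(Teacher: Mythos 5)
Your proof is correct and rests on the same key identification as the paper's: the complement of the event $\{U(X)\supseteq J_t(\theta)\}$ is precisely the event that at least one true hypothesis is rejected, which strong FWER control bounds by $\alpha$ uniformly in $\theta$. The paper reaches the same conclusion more circuitously---it decomposes the sure event $\{J_t(\theta)\subseteq U(X)\cup\overline{U}(X)\}$ into three disjoint pieces (all true hypotheses accepted, some accepted and some rejected via partitions $A_{\pi},B_{\pi}$, all rejected) and bounds the probability of the latter two by the FWER---but that partition machinery establishes nothing beyond your one-step complementation, so your argument is a streamlined version of the same approach.
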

The proof of Lemma \ref{upper_set_lemma_new} is provided in the appendix.
The set \(\overline{U}(x) = \{i : \varphi_i(x) = 1\}, \; i=1,\ldots,M\), represents the set of rejected hypotheses \(h_i, i=1,\ldots,M\), i.e., the set of statistically significant conclusions about which hypotheses are false.  
In the context of identifying a Gaussian graphical model, the set \(\overline{U}(x)\) corresponds to the set of statistically significant inferences regarding the presence of edges.

\begin{lemma}\label{low_set_lemma_new}
If the procedure \((\psi_1, \psi_2, \ldots, \psi_M)\), which performs simultaneous testing of the alternative hypotheses \(k_i,\ i=1,\ldots,M\), controls the family-wise error rate \(FWER \leq \alpha\) in the strong sense, then
\begin{equation}\label{lower_set_lemma_new_equation}
P_{\theta}(L(X)\subseteq J_t(\theta))\geq 1-\alpha,\ \ \forall\theta\in\Theta
\end{equation}
where $L(x)=\{i:\psi_i(x)=1\},\ \ i=1,\ldots,M$ is the set of rejected alternatives $k_i,\ \ i=1,\ldots,M$.
\end{lemma}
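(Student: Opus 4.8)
The plan is to mirror the argument behind Lemma~\ref{upper_set_lemma_new}, interchanging the roles of the hypotheses $h_i$ and the alternatives $k_i$. Since the procedure $(\psi_1,\ldots,\psi_M)$ tests the alternatives, the nulls \emph{for this procedure} are the statements $k_i:\theta_i\in\Theta_i^c$, and a false rejection occurs exactly when $\psi_i(X)=1$ at an index $i$ where $k_i$ is in fact true, i.e.\ $\theta_i\in\Theta_i^c$. First I would identify the set of such indices: by the definition given above it is precisely $J_f(\theta)=\{i:\theta_i\in\Theta_i^c\}$.

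Next I would unpack the assumption of strong FWER control. For the $\psi$-procedure this means that, for every $\theta\in\Theta$, the probability of rejecting at least one true alternative is at most $\alpha$, that is,
\[
P_\theta\bigl(L(X)\cap J_f(\theta)\neq\emptyset\bigr)\leq\alpha,
\]
where $L(X)=\{i:\psi_i(X)=1\}$. The crux of the proof is then the set-theoretic identity linking this error event to the target inclusion: because $J_t(\theta)$ and $J_f(\theta)$ partition $J$, the equality $L(X)\cap J_f(\theta)=\emptyset$ is equivalent to $L(X)\subseteq J\setminus J_f(\theta)=J_t(\theta)$. Passing to complements in the FWER bound yields
\[
P_\theta\bigl(L(X)\subseteq J_t(\theta)\bigr)=1-P_\theta\bigl(L(X)\cap J_f(\theta)\neq\emptyset\bigr)\geq 1-\alpha,
\]
which is exactly (\ref{lower_set_lemma_new_equation}).

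I do not expect a genuine analytic difficulty: once the definitions are aligned the conclusion is immediate. The single point requiring care is the correct reading of strong FWER control for a procedure whose nulls are the alternatives $k_i$. Strong control must hold uniformly over all $\theta$, including configurations in which some alternatives are true and others false, so that the bounded event is the false rejection of whichever alternatives happen to be true at the given $\theta$---namely those indexed by $J_f(\theta)$---rather than the weaker statement covering only the global null where every $k_i$ holds. Confirming that the hypothesis of the lemma is understood in this strong, configuration-by-configuration sense is the only place a misstep could occur.
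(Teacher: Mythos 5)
Your proof is correct, and it takes a genuinely more direct route than the paper's. You identify the target event with the complement of the family-wise error event---$\{L(X)\subseteq J_t(\theta)\}=\{L(X)\cap J_f(\theta)=\emptyset\}$, since $J_t(\theta)$ and $J_f(\theta)$ partition $J$---and then apply strong FWER control of $(\psi_1,\ldots,\psi_M)$ \emph{unconditionally}, which yields the bound in one step and handles the degenerate configurations ($J_t(\theta)=\emptyset$ or $J_f(\theta)=\emptyset$) with no case analysis. The paper instead splits on $\{L(X)=\emptyset\}$ versus $\{L(X)\neq\emptyset\}$ and, on the nonempty branch, decomposes $1=P_{\theta}(L\subseteq J\mid L\neq\emptyset)$ into three terms (all rejections correct; some correct and some false; all false), asserting that the two ``false rejection'' terms sum to at most $\alpha$ \emph{conditionally} on $L\neq\emptyset$. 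That conditional step is delicate: the false-rejection event is contained in $\{L\neq\emptyset\}$, so conditioning on $\{L\neq\emptyset\}$ can only inflate its probability, and the conditional bound by $\alpha$ does not actually follow from (unconditional) FWER control. For instance, with $J_t(\theta)=\{1\}$, $J_f(\theta)=\{2\}$, $\psi_1,\psi_2$ independent, $P(\psi_1=1)=0.9$ and $P(\psi_2=1)=\alpha$, one has $FWER=\alpha$ but $P(\psi_2=1\mid L\neq\emptyset)=\alpha/(0.9+0.1\alpha)>\alpha$. The lemma itself is of course true, and your unconditional argument---which is in effect what makes the paper's final conclusion hold---bypasses this issue entirely; it is the cleaner and more robust proof.
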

The proof of Lemma \ref{low_set_lemma_new} is provided in the appendix.
The set \(L(x) = \{i : \psi_i(x) = 1\}, \; i=1,\ldots,M\), represents the set of statistically significant conclusions about which alternative hypotheses are false, which is equivalent to statistically significant conclusions about the truth of the corresponding hypotheses.  
In the context of identifying a Gaussian graphical model, the set \(L(x)\) corresponds to the set of statistically significant inferences regarding the absence of edges, i.e., about the conditional independence of the corresponding components of the random vector \((X_1, \ldots, X_N)'\).

Lemma \ref{upper_set_lemma_new} and Lemma \ref{low_set_lemma_new} provide the possibility of separately constructing \(L(x)\) and \(U(x)\).  
For the simultaneous construction of both sets \(L(x)\) and \(U(x)\), consider the procedure  
\[
\delta = (\varphi_1, \psi_1, \varphi_2, \psi_2, \ldots, \varphi_M, \psi_M),
\]
where  

- \(\varphi_i, i=1,\ldots,M\)  tests for the hypotheses \(h_i\),  

- \(\psi_i, i=1,\ldots,M\)  tests for the alternative hypotheses \(k_i\).

Let's define the sets
$$L(x)=\{i:\psi_i=1\}$$
$$U(x)=\{i:\varphi_i=0\}$$
Additionally, we require that
\begin{equation}\label{comp_condition_new}
\{i:\varphi_i=1,\psi_i=1\}=\emptyset,\ \ \forall x\in R^{N\times n}, \forall i=1,\ldots,M,
\end{equation}
\begin{teo}\label{theorem_for_upper_low_bounds_construction}
If 
\begin{enumerate}
\item procedure $\delta=(\varphi_1,\psi_1,\varphi_2,\psi_2,\ldots,\varphi_M,\psi_M)$ which performs simultaneous testing of the hypotheses $h_1,k_1,h_2,k_2,\ldots,h_M,k_M$ controls $FWER\leq\alpha$ in strong sense $\alpha=1-P^{\star}$,
\item condition (\ref{comp_condition_new}) is satisfied,
\end{enumerate}
then
\begin{equation}\label{simult_upper_low_bounds_new_theorem}
P_{\theta}(L(X)\subseteq J_t(\theta)\subseteq U(X))\geq P^{\star}, \ \ \forall\theta\in\Theta
\end{equation}
\end{teo}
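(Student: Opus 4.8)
The plan is to translate the strong $FWER$ control of the combined procedure $\delta$ directly into the required containment, rather than applying Lemma \ref{upper_set_lemma_new} and Lemma \ref{low_set_lemma_new} separately. The separate route would only give the two events $\{J_t(\theta)\subseteq U(X)\}$ and $\{L(X)\subseteq J_t(\theta)\}$ each with probability at least $1-\alpha$, and a union bound would bound their intersection only by $1-2\alpha$. The point of treating $h_1,k_1,\ldots,h_M,k_M$ as a single family of $2M$ hypotheses is precisely to obtain the intersection at level $1-\alpha$ in one stroke.

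First I would fix $\theta\in\Theta$ and use the partition $J=J_t(\theta)\cup J_f(\theta)$. For this $\theta$ the true null hypotheses inside the merged family are exactly $h_i$ for $i\in J_t(\theta)$ (there $\theta_i\in\Theta_i$) together with $k_i$ for $i\in J_f(\theta)$ (there $\theta_i\in\Theta_i^c$). Consequently a Type I error of $\delta$ occurs precisely on the event $\bigcup_{i\in J_t(\theta)}\{\varphi_i=1\}\cup\bigcup_{i\in J_f(\theta)}\{\psi_i=1\}$, and strong control of the family-wise error rate at level $\alpha=1-P^{\star}$ bounds the probability of this event by $\alpha$.

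Taking complements, the probability of making no false rejection is at least $P^{\star}$, and the key step is to identify this complementary event with the target event. Indeed $\bigcap_{i\in J_t(\theta)}\{\varphi_i=0\}$ says every index of a true hypothesis is accepted, i.e. $J_t(\theta)\subseteq U(X)$; and $\bigcap_{i\in J_f(\theta)}\{\psi_i=0\}$ says no index outside $J_t(\theta)$ has its alternative rejected, i.e. $L(X)\cap J_f(\theta)=\emptyset$, which by the partition is equivalent to $L(X)\subseteq J_t(\theta)$. Hence the no-false-rejection event equals $\{L(X)\subseteq J_t(\theta)\}\cap\{J_t(\theta)\subseteq U(X)\}$, which is exactly $\{L(X)\subseteq J_t(\theta)\subseteq U(X)\}$, and (\ref{simult_upper_low_bounds_new_theorem}) follows at once.

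Finally I would record the role of hypothesis (\ref{comp_condition_new}). The containment-probability bound above uses only the combined $FWER$ control; condition (\ref{comp_condition_new}) enters to guarantee that $L(X)\subseteq U(X)$ holds for \emph{every} $x$, since $\{i:\varphi_i=1,\psi_i=1\}=\emptyset$ is equivalent to the implication $(\psi_i=1\Rightarrow\varphi_i=0)$, i.e. to $L(X)\subseteq U(X)$. This deterministic nesting is what makes the compound inclusion in (\ref{simult_upper_low_bounds_new_theorem}) a genuine chain of sets and the associated family of admissible graphs (of cardinality $2^{|U(X)\setminus L(X)|}$) a well-defined confidence set, rather than a statement that only happens to be coherent on the favourable event. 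The main obstacle here is conceptual rather than computational: one must resist decomposing the problem into the two one-sided lemmas and instead recognize $\{L(X)\subseteq J_t(\theta)\subseteq U(X)\}$ as the single ``accept every true null'' event of the merged $2M$-hypothesis family.
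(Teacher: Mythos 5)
Your proof is correct and follows essentially the same route as the paper's: both identify the no-false-rejection event of the merged $2M$-hypothesis family with $\{L(X)\subseteq J_t(\theta)\subseteq U(X)\}$ and apply strong FWER control, with condition (\ref{comp_condition_new}) serving only to guarantee the deterministic nesting $L(x)\subseteq U(x)$. The only difference is cosmetic: the paper splits off the edge cases $J_t(\theta)=J$ and $J_t(\theta)=\emptyset$ and treats them via Lemmas \ref{upper_set_lemma_new} and \ref{low_set_lemma_new}, whereas your argument covers them uniformly (the unions over empty index sets are vacuous), which is a mild streamlining rather than a different approach.
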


The proof of Theorem \ref{theorem_for_upper_low_bounds_construction} is provided in the appendix.  
From Theorem \ref{theorem_for_upper_low_bounds_construction} follows that the sets \(L(X)\) and \(\overline{U}(X)\) represent statistically significant conclusions about which of the simultaneously tested hypotheses \(h_1, k_1, h_2, k_2, \ldots, h_M, k_M\) are false.  
The set \(U(x) \setminus L(x)\) represents a set of statistically insignificant conclusions, or an admissible set of conclusions, or a zone of uncertainty.  
In the context of identifying a Gaussian graphical model, the set \(U(x) \setminus L(x)\) corresponds to pairs of vertices between which an edge can either be present or absent at the given significance level. All graphs consistent with these conclusions will belong to the confidence set \(S(x)\).



\begin{teo}\label{equivalence_mult_proc_suff_set}
Any procedure for constructing sets \(U(x)\) and \(L(x)\), with \(L(x) \subseteq U(x)\) and satisfying condition (\ref{simult_upper_low_bounds_new_theorem}), generates a simultaneous testing procedure for hypotheses \(h_1, \ldots, h_M\) and alternative hypotheses \(k_1, \ldots, k_M\), which controls the family-wise error rate \(FWER \leq \alpha\) in the strong sense, where \(\alpha = 1 - P^{\star}\).
\end{teo}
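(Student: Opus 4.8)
The plan is to prove this statement as the exact converse of Theorem \ref{theorem_for_upper_low_bounds_construction}, reading the same set-theoretic identity backwards. First I would convert the given sets $U(x)$ and $L(x)$ into a procedure $\delta=(\varphi_1,\psi_1,\ldots,\varphi_M,\psi_M)$ in the only way compatible with the definitions already fixed in the text: put $\varphi_i(x)=0$ exactly when $i\in U(x)$ (and $\varphi_i(x)=1$ otherwise), and put $\psi_i(x)=1$ exactly when $i\in L(x)$ (and $\psi_i(x)=0$ otherwise). Then $U(x)=\{i:\varphi_i(x)=0\}$ and $L(x)=\{i:\psi_i(x)=1\}$ by construction, so $\delta$ reproduces the original sets and is a bona fide simultaneous test of $h_1,k_1,\ldots,h_M,k_M$.

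Next I would observe that the hypothesis $L(x)\subseteq U(x)$ delivers the complementarity condition (\ref{comp_condition_new}) at no cost. Indeed $\{i:\varphi_i(x)=1,\psi_i(x)=1\}=\{i:i\notin U(x)\}\cap\{i:i\in L(x)\}=L(x)\setminus U(x)$, which is empty precisely when $L(x)\subseteq U(x)$; thus the procedure never simultaneously rejects a hypothesis and its own alternative, and is therefore a legitimate joint procedure in the sense required.

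The core step is to write the family-wise error event explicitly in terms of the two containments. Recalling that strong control refers to every configuration $\theta\in\Theta$, a false rejection of $h_i$ occurs when $i\in J_t(\theta)$ but $\varphi_i(x)=1$, i.e.\ $i\in J_t(\theta)\setminus U(x)$; a false rejection of $k_i$ occurs when $i\in J_f(\theta)$ but $\psi_i(x)=1$, i.e.\ $i\in L(x)\cap J_f(\theta)=L(x)\setminus J_t(\theta)$. Taking the union over $i$, the event that at least one false rejection is committed is $\{J_t(\theta)\not\subseteq U(x)\}\cup\{L(x)\not\subseteq J_t(\theta)\}$, which is exactly the complement of $\{L(x)\subseteq J_t(\theta)\subseteq U(x)\}$. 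Consequently $FWER=1-P_\theta(L(X)\subseteq J_t(\theta)\subseteq U(X))\leq 1-P^{\star}=\alpha$ for every $\theta$, by (\ref{simult_upper_low_bounds_new_theorem}), which is strong control at level $\alpha=1-P^{\star}$.

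I do not anticipate a genuine obstacle, since the whole argument reduces to the single set-complement identity above. The one point demanding care is the bookkeeping of what counts as a \emph{false} rejection in this mixed family: rejecting $h_i$ is erroneous only when $i\in J_t(\theta)$, while rejecting $k_i$ is erroneous only when $i\in J_f(\theta)$, and because $J_t(\theta)$ and $J_f(\theta)$ partition $J$ these two error types align perfectly with the failures of the upper and lower containments. Verifying that this correspondence holds for all $\theta$ simultaneously, matching the universal quantifier in (\ref{simult_upper_low_bounds_new_theorem}), is what upgrades the conclusion from weak to strong control.
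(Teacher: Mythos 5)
Your proposal is correct and follows essentially the same route as the paper's own proof: both rest on the identity that the event of at least one false rejection, namely $\{\exists i\in L(X)\cap J_f(\theta)\}\cup\{\exists j\in J_t(\theta)\cap\overline{U}(X)\}$, is exactly the complement of $\{L(X)\subseteq J_t(\theta)\subseteq U(X)\}$, so condition (\ref{simult_upper_low_bounds_new_theorem}) bounds the FWER by $1-P^{\star}=\alpha$ uniformly in $\theta$. Your explicit construction of $(\varphi_i,\psi_i)$ from the sets and the remark that $L(x)\subseteq U(x)$ yields condition (\ref{comp_condition_new}) only make explicit what the paper leaves implicit.
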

The proof of Theorem \ref{equivalence_mult_proc_suff_set} is provided in the appendix.  
Theorems \ref{theorem_for_upper_low_bounds_construction} and \ref{equivalence_mult_proc_suff_set} demonstrate that there is a one-to-one correspondence between the set of procedures for constructing confidence sets \(S(x)\) and the set of procedures for simultaneous testing of hypotheses \(h_1, \ldots, h_M\) and alternatives \(k_1, \ldots, k_M\).
 
Thus, the problem of constructing a confidence set \(S(X)\) for the graph \(G\) at level \(P^{\star}\) is equivalent to the problem of constructing a procedure for simultaneous testing hypotheses \(h_i\) and their alternatives \(k_i\), which controls the family-wise error rate (FWER) in the strong sense at level \(1 - P^{\star}\). Moreover, the set of pairs \((i,j)\) such that an edge between vertices \(i\) and \(j\) is present (or absent) in all graphs within the confidence set \(S(X)\) corresponds to the set of rejected hypotheses (or accepted alternatives), and therefore represents statistically significant conclusions about the structure of the graphical model.

\section{Simultaneous testing of hypotheses and alternatives}\label{conf_set_construction}


To construct confidence sets \(S(x)\), at least two approaches can be used.  
The first approach involves separately constructing the sets \(L(X)\) and \(U(X)\), relying on Lemmas \ref{upper_set_lemma_new} and \ref{low_set_lemma_new}, and then combining the obtained results.  
The second approach is based on Theorem \ref{theorem_for_upper_low_bounds_construction} and involves constructing a procedure for simultaneous testing of hypotheses and alternatives.  

In this work, we briefly discuss the first approach and focus on the second approach, which more closely aligns with the specifics of the class of problems under consideration.

\subsection{Separate construction of sets $L(X), U(X)$}

Let the lower and upper bounds \(L(x)\) and \(U(x)\) be constructed such that \(L(x) \subseteq U(x)\), and the following condition holds: $P_{\theta}(U(X)\supset J_t(\theta)\geq P_1^{\star}$, $P_{\theta}(L(X)\subset J_t(\theta)\geq P_2^{\star}$.
Then
$$P_{\theta}(L(X)\subset J_t(\theta)\subset U(X))=$$
$$=1-P_{\theta}(\{U(X)\not\supset J_t(\theta)\}\cup\{J_t(\theta)\not\subset L(X)\})\geq P_1^{\star}+P_2^{\star}-1$$
Therefore, the sets \(L(X)\) and \(U(X)\) satisfy (\ref{simult_upper_low_bounds_new_theorem}) with \(P^{\star} = P_1^{\star} + P_2^{\star} - 1\).
According to Theorem \ref{equivalence_mult_proc_suff_set}, the procedure for constructing \(L(X)\) and \(U(X)\) generates a simultaneous testing procedure for \(2M\) hypotheses \(h_1, \ldots, h_M\), \(k_1, \ldots, k_M\), controlling the Family-Wise Error Rate (\(FWER\)) at the level \(2 - P_1^{\star} - P_2^{\star} = \alpha_1 + \alpha_2\), where \(\alpha_1 = 1 - P_1^{\star}\) is the bound on the \(FWER\) of the procedure for constructing \(L(x)\), and \(\alpha_2 = 1 - P_2^{\star}\) is the bound on the \(FWER\) of the procedure for constructing \(U(x)\).
Note that since \(\alpha_1 + \alpha_2 = \alpha = 1 - P^{\star}\), it follows that \(\alpha_1 < \alpha\) and \(\alpha_2 < \alpha\).
In particular, one can set \(P_1^{\star} = P_2^{\star} = \frac{P^{\star} + 1}{2} = 1 - \frac{\alpha}{2}\).

Let's consider the application of the Holm procedure as the most rejective procedure within the class of monotone procedures \cite{Gordon_2008}.  
The simultaneous construction of \(L(X)\) and \(U(X)\) involves testing hypotheses \(h_1, \ldots, h_M\) using Holm's procedure with a family-wise error rate (FWER) control at level \(\alpha_1\), and testing alternatives \(k_1, \ldots, k_M\) with Holm's procedure controlling FWER at level \(\alpha_2\), under the condition that \(\alpha_1 + \alpha_2 = \alpha\).  
We denote this combined procedure as DH(\(\alpha_1,\alpha_2\)) (Double Holm).

Order the p-values of the hypotheses $h_1,\ldots,h_M$. 
$$\hat{p}_{h_{(1)}}\leq\hat{p}_{h_{(2)}}\leq\ldots\leq \hat{p}_{h_{(M)}}$$
Note that $\hat{p}_{h_{i}}+\hat{p}_{k_{i}}=1, \forall i=1,\ldots,M$. 
Then
\begin{enumerate}
\item set $\overline{U}(x)$ constructed by DH($\alpha_1,\alpha_2$) 
has the form $\overline{U}(x)=\{(1),(2),\ldots,(k)\}$, $k<M$ if
$$\{\hat{p}_{h_{(1)}}\leq\frac{\alpha_1}{M}\}\cap\{\hat{p}_{h_{(2)}}\leq\frac{\alpha_1}{M-1}\}\cap\ldots\cap\{\hat{p}_{h_{(k)}}\leq\frac{\alpha_1}{M-k+1}\}\cap\{\hat{p}_{h_{(k+1)}}>\frac{\alpha_1}{M-k}\}$$ or $\overline{U}(x)=\{(1),(2),\ldots,(M)\}$ if
$$\{\hat{p}_{h_{(1)}}\leq\frac{\alpha_1}{M}\}\cap\{\hat{p}_{h_{(2)}}\leq\frac{\alpha_1}{M-1}\}\cap\ldots\cap\{\hat{p}_{h_{(M)}}\leq\alpha_1\}\}$$
\item set $L(x)$ constructed by DH($\alpha_1,\alpha_2$) 
has the form $L(x)=\{(M),(M-1),\ldots,(M-l+1)\}$, $l<M$ if
$$\{\hat{p}_{h_{(M)}}>1-\frac{\alpha_2}{M}\}\cap\{\hat{p}_{h_{(M-1)}}>1-\frac{\alpha_2}{M-1}\}\cap\ldots\cap\{\hat{p}_{h_{(M-l+1)}}>1-\frac{\alpha_2}{M-l+1}\}\cap\{\hat{p}_{h_{(M-l)}}\leq 1-\frac{\alpha_2}{M-l}\}$$ or $L(x)=\{(M),(M-1),\ldots,(1)\}$ if
$$\{\hat{p}_{h_{(M)}}>1-\frac{\alpha_2}{M}\}\cap\{\hat{p}_{h_{(M-1)}}>1-\frac{\alpha_2}{M-1}\}\cap\ldots\cap\{\hat{p}_{h_{(1)}}>1-\alpha_2\}\}$$
\end{enumerate}
It is evident that $k+l\leq M$.


\subsection{Modified Bonferroni procedure.}
It is evident that for the individual construction of \(L(X)\) and \(U(X)\), one can use the Bonferroni procedure \(B(\alpha_1,\alpha_2)\) under the condition \(\alpha_1 + \alpha_2 = \alpha\).  
This procedure involves comparing \(\hat{p}_{h_i}\) with \(\frac{\alpha_1}{M}\) and \(1 - \frac{\alpha_2}{M}\). However, considering the specifics of the simultaneous testing problem for hypotheses and alternatives, it is possible to propose a procedure \(B(\alpha_1 + \alpha_2, \alpha_1 + \alpha_2) = B(\alpha, \alpha)\), which controls \(FWER \leq \alpha\) and rejects at least as many hypotheses and alternatives as the procedure \(B(\alpha_1, \alpha_2)\).

Such a type of procedure for testing one-sided hypotheses \(h_i: \theta_i \leq 0\) against the alternatives \(k_i: \theta_i > 0\), for \(i=1,\ldots,M\), was proposed in \cite{Bauer_1986} in connection with controlling directional errors. The procedure has the form:
\begin{equation}\label{modified_bonf_Bauer}
\delta=(\varphi_1,\psi_1,\ldots,\varphi_M,\psi_M)
\end{equation}
$$\varphi_i=\left\{\ 
							\begin{array}{ll}
							1,& T_i>c_{2i}\\
							0,& T_i\leq c_{2i}
							\end{array},
						\right.
	\psi_i=\left\{\ 
							\begin{array}{ll}
							1,& T_i<c_{1i}\\
							0,& T_i\geq c_{1i}
							\end{array}
				\right.,
$$ 
where constants are defined from
$$
P_0(T_i>c_{2i})=\frac{\alpha}{M},
P_0(T_i<c_{1i})=\frac{\alpha}{M}.
$$
Note that when \(\frac{\alpha}{M} < \frac{1}{2}\) (i.e., for \(M \geq 2\)), the constants satisfy \(c_{1i} < c_{2i}\), and the condition (\ref{comp_condition_new}) is fulfilled.  
In \cite{Bauer_1986}, it was proven that the procedure (\ref{modified_bonf_Bauer}) controls the family-wise error rate (FWER) at level \(\alpha\). Therefore, from Theorem \ref{theorem_for_upper_low_bounds_construction}, it follows that the procedure (\ref{modified_bonf_Bauer}) leads to the construction of the sets \(U(x)\) and \(L(x)\) at level \(1 - P^{\star}\) when testing one-sided hypotheses.  

The procedure (\ref{modified_bonf_Bauer}) was used in \cite{Koldanov_2023} for constructing upper and lower bounds of the edge set in the problem of identifying a thresholded proximity graph.

For testing arbitrary hypotheses, a simple generalization of the procedure (\ref{modified_bonf_Bauer}) can be proposed, which leads to the construction of the sets \(\overline{U}(x) = J \setminus U(x)\) and \(L(x)\), having the following form:\begin{equation}\label{sets_mod_proc_Bonf}
\begin{array}{l}
\overline{U}(x)=\{i:\hat{p}_{h_{i}}\leq\frac{\alpha}{M}\}\\
L(x)=\{i:\hat{p}_{h_{i}}>1-\frac{\alpha}{M}\}.
\end{array}
\end{equation}
Let's show that under the condition (\ref{comp_condition_new}), the sets (\ref{sets_mod_proc_Bonf}) satisfy the condition:$$P_{\theta}(L\subset J_t\subset U)\geq \alpha=1-P^{\star}.$$




Consider arbitrary hypotheses \(h_i: \theta_i \in \Theta_i\) and alternatives \(k_i: \theta_i \in \Theta_i^c\), for \(i=1,\ldots,M\).  
Let \(\varphi_i\) and \(\psi_i\) be tests for the hypotheses \(h_i\) and the alternatives \(k_i\), respectively, each at level \(\frac{\alpha}{M}\).

Suppose that:
$$J_t(\theta)=\{i_1,\ldots,i_k\}, J_f(\theta)=J\setminus J_t(\theta),$$
$$L=\{i:\psi_i=1\}, U=\{i:\varphi_i=0\}, 0\leq k\leq M.$$
Then
$$P_{\theta}(L\subset J_t\subset U)=1-P_{\theta}(L\not\subset J_t\mbox{ or }J_t\not\subset U)=$$
$$=1-P_{\theta}(\exists j\in L\cap J_f\mbox{ or }\exists k\in J_t\cap\overline{U})\geq 1-P_{\theta}(\exists j\in L\cap J_f)-P_{\theta}(\exists k\in J_t\cap\overline{U})\geq$$
$$\geq 1-\frac{k_1\alpha}{M}-\frac{k_2\alpha}{M}$$
where 
$k_1$ is the number of indices that fall into $L\cap J_f$,

$k_2$ is the number of indices that fall into $J_t\cap\overline{U}$.

If condition (\ref{comp_condition_new}) is fulfilled then $k_1+k_2\leq M$ and therefore 
$$P_{\theta}(L\subset J_t\subset U)\geq 1-\alpha$$  
Following \cite{Romano2015}, we will refer to the procedure (\ref{sets_mod_proc_Bonf}) as the modified Bonferroni procedure and denote it by \(mB(\alpha)\).

\subsection{Generalized Closure Method}
One of the most general methods for constructing procedures for simultaneous testing of hypotheses \(h_1, \ldots, h_K\) is the closure method \cite{Marcus1976}. To account for the specifics of the problem involving simultaneous testing of hypotheses and alternatives, we will use the following simple generalization of the closure method:  
if any hypotheses \(H_Q = \bigcap_{i \in Q} h_i = \emptyset\) (where \(Q\) is some subset of the indices of the hypotheses being tested), then set \(\varphi_Q \equiv 1\).  
It is straightforward to verify that this generalization leads to a procedure for simultaneous hypothesis testing that controls the Family-Wise Error Rate (FWER) in a strong sense. For completeness, the corresponding result is provided in the appendix.

In the case of simultaneous testing of hypotheses and alternatives \(h_1, k_1, h_2, k_2, \ldots, h_M, k_M\), the closure method involves testing all intersection hypotheses of the form  
\[
H_{J_1,J_2} = \left(\bigcap_{i \in J_1} h_i\right) \cap \left(\bigcap_{i \in J_2} k_i\right), \quad \forall J_1, J_2 \subseteq \{1, 2, \ldots, M\}.
\]  
Since the intersection hypotheses \(H_{J_1,J_2} = \bigcap_{i \in J_1} h_i \cap \bigcap_{i \in J_2} k_i\), for which \(J_1 \cap J_2 \neq \emptyset\), are empty, we set:
$$\varphi_{J_1,J_2}\equiv 1, \forall J_1,J_2\subseteq\{1,2,\ldots,M\}:J_1\cap J_2\neq\emptyset.$$

Let hypotheses and alternatives \(h_1, k_1, h_2, k_2, \ldots, h_M, k_M\) satisfy the free combination condition \cite{Liu1996}, according to which 
\begin{equation}
\begin{array}{l}
\forall P\subseteq\{1,2,\ldots,M\} \mbox{ one has } \\
\bigcap_{i\in P}h_i\cap\bigcap_{i\in\{1,2,\ldots,M\}\setminus P}k_i\neq\emptyset\\
\end{array}
\end{equation}

In this case, among all non-empty intersection hypotheses, the minimal ones \cite{Finner_2002} are the intersection hypotheses of the form  
\[
H_{J_1} = H_{J_1, \{1,\ldots,M\} \setminus J_1} = \left(\bigcap_{i \in J_1} h_i\right) \cap \left(\bigcap_{i \in \{1,\ldots,M\} \setminus J_1} k_i\right).
\]
Each of these hypotheses corresponds to an intersection of \(M\) parametric regions. The number of such intersections and consequently, the number of hypotheses is \(2^M\). Under the free combination condition for hypotheses and alternatives, all hypotheses \(H_{J_1}\) are non-empty.
Any hypothesis \(H_{J_1,J_2} = \bigcap_{i \in J_1} h_i \cap \bigcap_{i \in J_2} k_i\), such that \(J_1 \cup J_2 \subseteq \{1, 2, \ldots, M\}\) and \(J_1 \cap J_2 = \emptyset\), can be represented as the union of the corresponding hypotheses \(H_{J_i}\), namely:
$$H_{J_1,J_2}=\bigcap_{i\in J_1}h_i\cap\bigcap_{i\in J_2}k_i=\bigcup_{\pi\subseteq\{i_1,\ldots,i_k\}}H_{J_1\cup\pi}=\bigcup_{\pi\subseteq\{i_1,\ldots,i_k\}}\left(\bigcap_{i\in J_1\cup\pi}h_i\cap\bigcap_{i\in \{1,\ldots,M\}\setminus \{J_1\cup\pi\}}k_i\right)$$
where $\{i_1,\ldots,i_k\}=\{1,2,\ldots,M\}\setminus(J_1\cup J_2)$. 
Thus, under the free combination condition for hypotheses and alternatives, the hypotheses \(H_{J_1,J_2}\) are non-empty for all \(J_1, J_2 \subseteq \{1,\ldots,M\}\) with \(J_1 \cap J_2 = \emptyset\).  
To test the intersection hypotheses \(H_{J_1,J_2}\), we will use union-intersection tests, in which each hypothesis or alternative belonging to \(J_1 \cup J_2\) is tested at the same significance level.    
\begin{teo}\label{closure_equivalent_bonferroni_theorem}
If
\begin{enumerate}
\item the hypotheses \(h_i\) and the alternatives \(k_i\) satisfy the free combination condition,
\item to test the intersection hypothesis \(H_{J_1,J_2}\), a union-intersection method is used,
\item the condition (\ref{comp_condition_new}) $\forall x\in R^{N\times M},i=1,\ldots,M\ \ \{i:\varphi_i=1,\psi_i=1\}=\emptyset$ is satisfied, 
\end{enumerate}
then, the procedure for simultaneous testing of hypotheses \(h_i\) and alternatives \(k_i\), constructed using the generalized closure method with control of the Family-Wise Error Rate (FWER) at level \(\alpha\), is a single-step procedure and has the form:
$$\left(\varphi_1,\ldots,\varphi_M,\psi_1(x),\ldots,\psi_M(x)\right)$$
where 
$$\varphi_i(x)=\left\{\
									\begin{array}{ll}
									1,& \hat{p}_{h_i}<\alpha(M)\\
									0,& \hat{p}_{h_i}\geq\alpha(M)
									\end{array}
							 \right.,
$$
$$\psi_i(x)=\left\{\
									\begin{array}{ll}
									1,& \hat{p}_{k_i}<\alpha(M)\\
									0,& \hat{p}_{k_i}\geq\alpha(M)
									\end{array}
							 \right.
$$

$\alpha(M)$ is defined from equation
$$P_{\theta}(\min_{i\in\{1,\ldots,M\}}(\hat{p}_{h_i})<\alpha(M))=\alpha.$$ 
\end{teo}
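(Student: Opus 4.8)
The plan is to show that, under the three hypotheses, the generalized closure procedure built from union--intersection local tests collapses into the single-step procedure displayed in the statement. Strong control of the $FWER$ at level $\alpha$ is not really the issue here: it is inherited directly from the general property of the generalized closure method recorded in the appendix. The whole content of the theorem is therefore the \emph{form} of the individual decisions, and it suffices to prove the equivalence
\[
\varphi_i(x)=1 \iff \hat{p}_{h_i}<\alpha(M), \qquad \psi_i(x)=1 \iff \hat{p}_{k_i}<\alpha(M),
\]
i.e. that the closure rule rejects $h_i$ exactly when $\hat{p}_{h_i}<\alpha(M)$, and symmetrically for $k_i$.

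First I would fix the bookkeeping of the closure method. The closure rule rejects $h_i$ iff every intersection hypothesis $H_{J_1,J_2}$ with $i\in J_1$ is rejected by its local test; the empty ones (those with $J_1\cap J_2\neq\emptyset$) are automatically rejected since $\varphi_{J_1,J_2}\equiv 1$, so only the hypotheses with $J_1\cap J_2=\emptyset$ are binding, and by the free combination condition these are all non-empty. Let $\alpha(m)$ denote the common per-component significance level of the union--intersection test of an $m$-component intersection calibrated to overall level $\alpha$, so that this test rejects iff the minimum of its $m$ component p-values is below $\alpha(m)$. Because a minimum taken over a larger index set is stochastically smaller, the level-$\alpha$ calibration forces $\alpha(m)$ to be non-increasing in $m$; hence $\alpha(M)=\min_{1\le m\le M}\alpha(m)$ is attained at the full ($m=M$) intersections, and this is exactly the quantity defined by $P_\theta(\min_i \hat{p}_{h_i}<\alpha(M))=\alpha$ in the statement. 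In particular $\alpha(M)\le\alpha$, since $\min_i\hat{p}_{h_i}\le\hat{p}_{h_1}$ and the latter is, at the boundary configuration, uniformly distributed. The easy direction is now immediate: if $\hat{p}_{h_i}<\alpha(M)$, then for every $H_{J_1,J_2}$ with $i\in J_1$ the component $h_i$ already satisfies $\hat{p}_{h_i}<\alpha(M)\le\alpha(m)$, so the local union--intersection test rejects; thus all intersection hypotheses containing $h_i$ are rejected and the closure rule rejects $h_i$.

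For the converse I would isolate the dependence of the rejection on $\hat{p}_{h_i}$ alone. Assume the closure rule rejects $h_i$. Using the identity $\hat{p}_{h_j}+\hat{p}_{k_j}=1$, construct a full intersection hypothesis $H_{J^\ast}$ that places $h_i$ on coordinate $i$ and, for every $j\neq i$, the one of $h_j,k_j$ whose p-value is $\ge\frac{1}{2}$. By free combination $H_{J^\ast}$ is a legitimate non-empty member of the closure family with $i\in J_1^\ast$, hence it is rejected, so the minimum of its $M$ component p-values is $<\alpha(M)$. Since $\alpha(M)\le\alpha<\frac{1}{2}$ and every component of $H_{J^\ast}$ other than $h_i$ has p-value $\ge\frac{1}{2}>\alpha(M)$, the minimum can only be attained at $h_i$, forcing $\hat{p}_{h_i}<\alpha(M)$. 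Combining the two directions gives the claimed equivalence for $\varphi_i$; the argument for $\psi_i$ and $k_i$ is identical with the roles of the two complementary p-values exchanged. Finally, $\alpha(M)<\frac{1}{2}$ together with $\hat{p}_{h_i}+\hat{p}_{k_i}=1$ shows that no index can have both $\hat{p}_{h_i}<\alpha(M)$ and $\hat{p}_{k_i}<\alpha(M)$, so condition (\ref{comp_condition_new}) is consistent with the derived decisions, which is what permits Theorem \ref{theorem_for_upper_low_bounds_construction} to be applied to the resulting sets.

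I expect the converse to be the main obstacle: the delicate point is to exhibit a single intersection hypothesis whose rejection is forced to come through the $h_i$-component and not through any other coordinate, and this is precisely where the complementarity $\hat{p}_{h_j}+\hat{p}_{k_j}=1$, the bound $\alpha(M)<\frac{1}{2}$, and the free combination condition (which guarantees the constructed hypothesis is non-empty and therefore actually tested) all enter in an essential way. A secondary technical point to nail down is the monotonicity of $\alpha(m)$ and the assertion that all full intersections share the single threshold $\alpha(M)$; this rests on the least-favorable (boundary) calibration of the union--intersection tests and on minima over larger index sets being stochastically smaller.
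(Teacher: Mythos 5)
Your overall strategy coincides with the paper's own proof: reduce to the non-empty intersections $H_{J_1,J_2}$ (free combination), note that the union--intersection test rejects iff the minimum component p-value falls below $\alpha(L)$, use monotonicity of $\alpha(L)$ in $L$ so that $\alpha(M)=\min_L\alpha(L)$, get the easy direction from $\hat{p}_{h_i}<\alpha(M)\leq\alpha(L)$, and prove the converse by exhibiting an adversarially chosen intersection whose rejection can only come through coordinate $i$. The one place where you genuinely deviate is also where your proof has a gap: in the converse you build the full intersection $H_{J^{\ast}}$ by selecting, for each $j\neq i$, the component of $\{h_j,k_j\}$ whose p-value is at least $\frac{1}{2}$, and you justify that such a component exists by the identity $\hat{p}_{h_j}+\hat{p}_{k_j}=1$. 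That identity is not among the theorem's hypotheses; it is an observation the paper makes in the separate context of the $DH(\alpha_1,\alpha_2)$ construction (and it does hold for the partial-correlation tests of Section \ref{gaussian_graph_model}), but Theorem \ref{closure_equivalent_bonferroni_theorem} is stated for arbitrary hypotheses and alternatives. Under the stated hypotheses alone nothing prevents, say, $\hat{p}_{h_j}=0.3$ and $\hat{p}_{k_j}=0.35$ for some $j$ --- perfectly compatible with free combination and with condition (\ref{comp_condition_new}) when $\alpha(M)<0.3$ --- and then your $H_{J^{\ast}}$ is undefined and the converse collapses.

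The repair is precisely the hypothesis you set aside: condition (\ref{comp_condition_new}), which in your write-up appears only as an after-the-fact consistency check, is what the paper's proof actually uses at this step. Read at the level of the component tests, (\ref{comp_condition_new}) says that $\hat{p}_{h_j}$ and $\hat{p}_{k_j}$ cannot both lie below $\alpha(L)$; hence for each $j\neq i$ at least one of the two components has p-value $\geq\alpha(M)$, and choosing that component in $H_{J^{\ast}}$ forces the minimum to be attained at coordinate $i$, giving $\hat{p}_{h_i}<\alpha(M)$ exactly as you intended. With this substitution your argument goes through verbatim, and it is in fact marginally more economical than the paper's, which runs the same adversarial argument separately at every level $L=|J_1\cup J_2|$ (concluding that all size-$L$ intersections containing $h_1$ are rejected iff $\hat{p}_{h_1}<\alpha(L)$) and only then minimizes over $L$, whereas your converse needs only the single full-size intersection. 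Everything else in your proposal --- inheriting FWER control from the appendix theorem on the generalized closure method, the automatic rejection of the empty intersections, and the monotonicity of $\alpha(L)$ --- matches the paper.
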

Proof of the theorem \ref{closure_equivalent_bonferroni_theorem} is given in Appendix.
\begin{corollary}
If p-values $\hat{p}_{h_i}$ are independent $\forall i=1,\ldots,M$ then $\alpha(M)=1-\sqrt[M]{1-\alpha}$.
\end{corollary}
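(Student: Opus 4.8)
The plan is to evaluate the defining equation for $\alpha(M)$ from Theorem \ref{closure_equivalent_bonferroni_theorem} directly, using independence to convert the distribution of the minimum p-value into a product. First I would pass to the complementary event: the event $\{\min_{i\in\{1,\ldots,M\}}\hat{p}_{h_i}<\alpha(M)\}$ is exactly the complement of the event that every p-value is at least $\alpha(M)$, so the equation $P_{\theta}(\min_i \hat{p}_{h_i}<\alpha(M))=\alpha$ rewrites as
$$1-P_{\theta}\left(\bigcap_{i=1}^M\{\hat{p}_{h_i}\geq\alpha(M)\}\right)=\alpha.$$
Invoking the hypothesis that the p-values $\hat{p}_{h_i}$ are independent then factors the intersection probability,
$$P_{\theta}\left(\bigcap_{i=1}^M\{\hat{p}_{h_i}\geq\alpha(M)\}\right)=\prod_{i=1}^M P_{\theta}(\hat{p}_{h_i}\geq\alpha(M)).$$

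The key step, and the one requiring care, is the evaluation of each individual factor. As in the closure construction underlying Theorem \ref{closure_equivalent_bonferroni_theorem}, the calibrating probability is taken under the least favorable configuration $\theta$ for the global intersection hypothesis, namely the configuration in which all hypotheses $h_i$ hold simultaneously. Under this configuration each (continuous, exactly calibrated) test statistic produces a p-value that is uniform on $[0,1]$, whence $P_{\theta}(\hat{p}_{h_i}\geq\alpha(M))=1-\alpha(M)$ for every $i$. Substituting this into the product gives $(1-\alpha(M))^M$, and the defining equation collapses to the scalar identity $1-(1-\alpha(M))^M=\alpha$.

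It then remains only to invert this identity: from $(1-\alpha(M))^M=1-\alpha$ I would take $M$-th roots to obtain $1-\alpha(M)=\sqrt[M]{1-\alpha}$, hence $\alpha(M)=1-\sqrt[M]{1-\alpha}$, the claimed \v{S}id\'ak-type threshold. The main obstacle is not computational but conceptual: one must justify that the probability in the defining equation is evaluated under the all-true-null configuration with uniform p-values, so that each factor equals $1-\alpha(M)$ independently of $i$ and of the nuisance parameters. Once that reference distribution and the stated independence are in place, the rest is the elementary algebraic inversion above.
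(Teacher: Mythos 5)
Your proposal is correct: the paper states this corollary without proof, and your argument (passing to the complement, factoring by independence, using uniformity of each p-value under the calibrating boundary configuration $\theta_i\in[\Theta_i]\cap[\Theta_i^c]$ from the proof of Theorem \ref{closure_equivalent_bonferroni_theorem}, then inverting $1-(1-\alpha(M))^M=\alpha$) is exactly the standard \v{S}id\'ak computation the paper implicitly relies on. Your attention to the conceptual point — that the defining equation is evaluated at the least favorable (all-null, boundary) configuration where the p-values are uniform — is precisely what makes the factorization legitimate, and it matches the paper's calibration convention.
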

\begin{corollary}
If the \(mB(\alpha)\) procedure is used to test the intersection hypotheses, then$\alpha(M)=\frac{\alpha}{M}$.
\end{corollary}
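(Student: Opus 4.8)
The plan is to treat this as a direct specialization of Theorem~\ref{closure_equivalent_bonferroni_theorem}: I do not re-derive the single-step structure (that is the content of the theorem), but only identify the common individual significance level at which the components of the binding intersection hypothesis are tested once the union-intersection tests are instantiated by the modified Bonferroni rule $mB(\alpha)$ of (\ref{sets_mod_proc_Bonf}). Reading the critical value $\alpha(M)$ off this level will give $\alpha(M)=\alpha/M$.

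First I would recall the structure established in Theorem~\ref{closure_equivalent_bonferroni_theorem}. Under free combination the minimal non-empty intersection hypotheses are the $H_{J_1}=\bigcap_{i\in J_1}h_i\cap\bigcap_{i\in\{1,\ldots,M\}\setminus J_1}k_i$, and each of them is an intersection of exactly $M$ parametric conditions, one drawn from the pair $\{h_i,k_i\}$ for every index $i=1,\ldots,M$. The single-step threshold $\alpha(M)$ is exactly the individual level at which the $M$ components of such a full intersection are tested by the union-intersection method, because for an intersection of size $m=|J_1\cup J_2|$ the individual threshold $\alpha/m$ is decreasing in $m$, so the largest intersection ($m=M$) imposes the most stringent requirement and is the binding one in the closure.

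Next I would instantiate the union-intersection test by the Bonferroni rule underlying $mB(\alpha)$: the intersection hypothesis $H_{J_1,J_2}$ of size $m$ is rejected as soon as one of its $m$ component p-values falls below the individual threshold, and to guarantee level $\alpha$ one sets that threshold to $\alpha/m$. The verification is the union bound, using that under $H_{J_1,J_2}$ each relevant p-value is stochastically larger than the uniform law:
\[
P_{\theta}\left(\min_{i\in J_1}\hat{p}_{h_i}<\frac{\alpha}{m}\ \mbox{ or }\ \min_{i\in J_2}\hat{p}_{k_i}<\frac{\alpha}{m}\right)\le m\cdot\frac{\alpha}{m}=\alpha.
\]
This is precisely the conservative, distribution-free substitute for the exact calibration $P_{\theta}(\min_i\hat{p}_{h_i}<\alpha(M))=\alpha$ appearing in the theorem. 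Applying the bound to the binding full intersection $m=M$ identifies the individual level as $\alpha/M$, whence $\alpha(M)=\alpha/M$; substituting this back into the single-step procedure reproduces exactly the sets $\overline{U}(x)=\{i:\hat{p}_{h_i}\le\alpha/M\}$ and $L(x)=\{i:\hat{p}_{h_i}>1-\alpha/M\}$ of (\ref{sets_mod_proc_Bonf}), which confirms the claim.

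The step I expect to require the most care is the component count together with the reconciliation of the Bonferroni inequality with the equality in the theorem's defining equation. One must notice that, although there are $2M$ elementary hypotheses $h_1,k_1,\ldots,h_M,k_M$, condition (\ref{comp_condition_new}) and the free-combination structure force every minimal intersection to retain exactly one of $h_i,k_i$ for each $i$, so the Bonferroni denominator is $M$ and not $2M$ — this is what produces $\alpha/M$ rather than $\alpha/(2M)$. One must also acknowledge that the theorem's equation is an exact calibration while $mB(\alpha)$ replaces it by the union bound, so $\alpha(M)=\alpha/M$ is the value that guarantees $FWER\le\alpha$ conservatively rather than with equality; checking $\alpha/M\le 1/2$ for $M\ge 2$ secures condition (\ref{comp_condition_new}) and legitimizes invoking Theorem~\ref{closure_equivalent_bonferroni_theorem}.
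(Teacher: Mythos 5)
Your proposal is correct and follows essentially the route the paper intends: the corollary is an instantiation of Theorem~\ref{closure_equivalent_bonferroni_theorem} in which the exact calibration $P_{\theta}(\min_i \hat{p}_{h_i}<\alpha(M))=\alpha$ is replaced by the Bonferroni union bound, which is precisely the $k_1\alpha/M + k_2\alpha/M \leq \alpha$ computation the paper carries out in its modified Bonferroni subsection, with the binding minimal intersections having exactly $M$ components (one of $h_i,k_i$ per index, by free combination) so that the denominator is $M$ and not $2M$. Your explicit remarks on the conservative (inequality) versus exact calibration and on why condition (\ref{comp_condition_new}) holds for $\alpha/M\leq 1/2$ are accurate and, if anything, more careful than the paper, which states the corollary without proof.
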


\section{Undirected Gaussian graphical model selection}\label{gaussian_graph_model}

It is well known that level $\alpha$ test for testing hypothesis $h_{i,j}:\rho^{i,j}=0$ against alternative $k_{i,j}:\rho^{i,j}\neq 0$ has the form \cite{Anderson_2003}:
\begin{equation}\label{Partial_correlation_test}
\varphi_{i,j}=\left\{\ 
\begin{array}{ll} 
0,&|r^{i,j}|\leq c_{i,j}\\
1,&|r^{i,j}|> c_{i,j}
\end{array}\right.
\end{equation} 
where 
$c_{i,j}$ is $(1-\alpha/2)$-quantile of the distribution with the following density function
\begin{equation}\label{density_function_sample_partial_correlation_test}
f(x)=\displaystyle \frac{1}{\sqrt{\pi}}\frac{\Gamma(n-N+1)/2)}{\Gamma((n-N)/2)}(1-x^2)^{(n-N-2)/2}, \ \ \ -1 \leq x \leq 1
\end{equation}
In \cite{Koldanov_2017} it was shown that test (\ref{Partial_correlation_test}) is UMPU.

It is evident that level $\alpha$ test for testing alternative $k_{i,j}:\rho^{i,j}\neq 0$ against hypothesis $h_{i,j}:\rho^{i,j}=0$ 
has the form
\begin{equation}\label{Partial_correlation_test_alternative}
\psi_{i,j}=\left\{\ 
\begin{array}{ll} 
1,&|r^{i,j}|\leq c'_{i,j}\\
0,&|r^{i,j}|> c'_{i,j}
\end{array}\right.
\end{equation} 
where 
$c'_{i,j}$ is $\alpha/2$-quantile of the distribution with the density function (\ref{density_function_sample_partial_correlation_test}).
The widely accepted practice for applying the test (\ref{Partial_correlation_test}) is to use Fisher's z-transformation \cite{Anderson_2003}, \cite{Drton2004}.

Hypotheses \(h_{i,j}\), \(i,j=1,\ldots,N\), and alternatives \(k_{i,j}\), \(i,j=1,\ldots,N\), satisfy the free combination condition.  
For \(\alpha < \frac{1}{2}\), it holds that \(c'_{i,j} < c_{i,j}\), i.e., \(\{(i,j): \varphi_{i,j} = 1, \psi_{i,j} = 1\}\) is empty. Thus, the condition (\ref{comp_condition_new}) is satisfied.  
Let us consider testing the intersection hypothesis  
\[
H_{J_1,J_2} = \bigcap_{(i,j) \in J_1} h_{i,j} \cap \bigcap_{(l,s) \in J_2} k_{l,s}
\]
using union-intersection tests. Then, from Theorem \ref{closure_equivalent_bonferroni_theorem}, it follows that the procedure for simultaneous testing of hypotheses \(h_{i,j}\) and alternatives \(k_{i,j}\), constructed via the generalized closure method, is a single-step procedure and has the form:
\begin{equation}\label{modified_Bonfer_GGM}
(\varphi_{1,2},\psi_{1,2},\varphi_{1,3},\psi_{1,3},\ldots,\varphi_{N-1,N},\psi_{N-1,N})
\end{equation} 
where tests $\varphi_{i,j},\psi_{i,j}$ are level $\alpha(M)$ tests (\ref{Partial_correlation_test}),(\ref{Partial_correlation_test_alternative}) respectively. 
In terms of p-values, the procedure (\ref{modified_Bonfer_GGM}) has the form:
\begin{equation}\label{modified_Bonfer_GGM_through_indicators}
\left(I\left(\hat{p}^{\varphi}_{1,2}\leq 1-\sqrt[M]{1-\alpha}\right),I\left(\hat{p}^{\varphi}_{1,2}\geq \sqrt[M]{1-\alpha}\right),\ldots,I\left(\hat{p}^{\varphi}_{N-1,N}\leq 1-\sqrt[M]{1-\alpha}\right),I\left(\hat{p}^{\varphi}_{N-1,N}\geq \sqrt[M]{1-\alpha}\right)\right)
\end{equation} 
In accordance with Theorem \ref{theorem_for_upper_low_bounds_construction}, the procedure (\ref{modified_Bonfer_GGM_through_indicators}) results in the construction of upper and lower confidence bounds for the set \(\{(i,j): \rho^{i,j} = 0\}\). These bounds take the form:
$$U(x)=\{(i,j):\hat{p}^{\varphi}_{i,j}>1-\sqrt[M]{1-\alpha}\}; $$
$$L(x)=\{(i,j):\hat{p}^{\varphi}_{i,j}\geq \sqrt[M]{1-\alpha}\}$$
The bounds \(U(x)\) and \(L(x)\) generate a confidence set \(S(x)\) for the undirected Gaussian graphical model, which serves as our model selection method.

Such a model selection method for the undirected Gaussian graphical model can be conveniently represented as a set of symmetric adjacency matrices \(A_{N \times N} = (A_{i,j})\), where \(A_{i,j} = A_{j,i}\), with the following property:
$$
A_{i,j}=\left\{\
				\begin{array}{ll}
				 1,& \hat{p}^{\varphi}_{i,j}\leq \alpha(M)\\
				 0,& \hat{p}^{\varphi}_{i,j}\geq 1-\alpha(M)\\
				 \epsilon_{i,j},& \alpha(M)<\hat{p}^{\varphi}_{i,j}<1-\alpha(M)
				\end{array}
			 \right.
$$
$\epsilon_{i,j}\in\{0,1\}$. The set \(\{(i,j):A_{i,j} = 1\} = \overline{U}(x)\) represents the set of significant inferences about the presence of edges.  
The set \(\{(i,j):A_{i,j} = 0\} = L(x)\) represents the set of significant inferences about the absence of edges.  
The set \(\{(i,j):A_{i,j} = \epsilon_{i,j}\} = U(x) \setminus L(x)\) corresponds to the set of insignificant inferences or the zone of uncertainty.  

By assigning values 0 or 1 to different elements \(\epsilon_{i,j}\) for pairs \((i,j) \in U(x) \setminus L(x)\), we obtain a confidence set \(S(x)\) for Gaussian graphical models.  
Any graph from this confidence set \(S(x)\) can be chosen as the true undirected Gaussian graphical model.  

The size of this set of graphs is \(2^{|U(x) \setminus L(x)|}\).
Another model selection method for the undirected Gaussian graphical model can be based on the application of the procedure \(DH(\alpha_1, \alpha_2)\), where \(\alpha_1 + \alpha_2 = \alpha\).
\subsection{Examples}
We now apply our model selection methods to three well-known examples and compare our selected models to the results of SINfull approach \cite{Drton2004}.

In the analysis of examples, both \(\alpha(M) = \frac{\alpha}{M}\) and \(\alpha(M) = 1 - \sqrt[M]{1 - \alpha}\) were used. Since the obtained results do not differ, for simplicity of notation, below are the numerical values corresponding to the \(mB(\alpha)\) procedure.
\begin{example}
Let us consider example cork boring from \cite{Drton2004} where $N=4$, i.e. $M=6$.
Corresponding p-values $\hat{p}^{\varphi}_{i,j}$
are equal to:
$$\hat{p}^{\varphi}_{1,2}=0.01;\hat{p}^{\varphi}_{1,3}=0.71;\hat{p}^{\varphi}_{1,4}=0.44;\hat{p}^{\varphi}_{2,3}=0.9;\hat{p}^{\varphi}_{2,4}=0.95;\hat{p}^{\varphi}_{3,4}=0.001. 
$$
\begin{enumerate}
\item $P^{\star}=0.99$ Then $\alpha=0.01$ $\frac{\alpha}{M}=0.0017$, $1-\frac{\alpha}{M}=0.9983$. Therefore, in accordance with the \(mB(\alpha)\) procedure, we obtain:  
$$L(x)=\emptyset$$
$$\overline{U}(x)=\{(3,4)\}$$
$$U(x)\setminus L(x)=\{(1,2);(1,3);(1,4);(2,3);(2,4)\}$$
The confidence set \(S(x)\) is schematically illustrated in Fig.\ref{Cork_boring_099}. Solid lines indicate edges that are present in all graphical models within the set $S(x)$. 
\begin{figure}[h!]
\centering
\includegraphics*[width=0.25\textwidth]{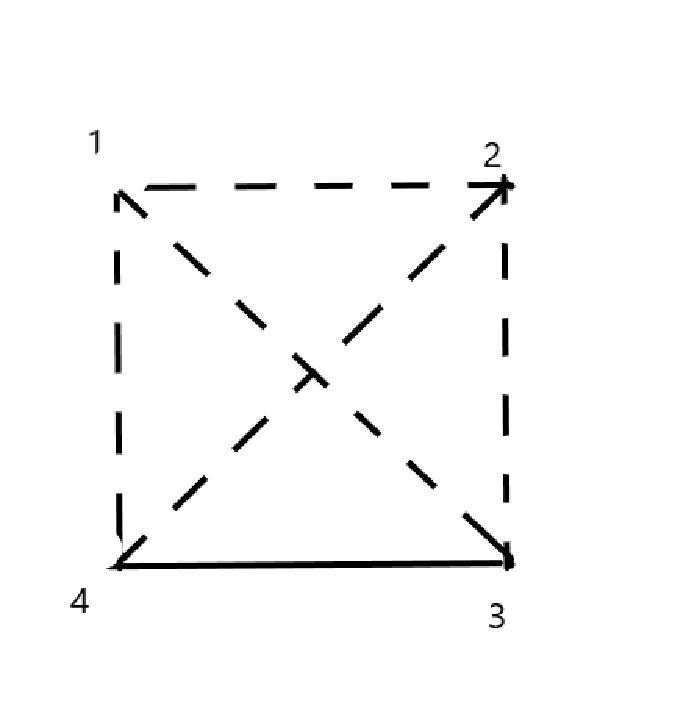}
\caption{Cork boring from \cite{Drton2004}, $N=4$. Solid lines correspond to significant edges, while dashed lines correspond to insignificant edges. $P^{\star}=0.99$
}\label{Cork_boring_099}
\end{figure}
The set of graphical models that can be obtained by replacing the dashed lines in the graph of Fig.\ref{Cork_boring_099} with either solid lines or no lines at all constitutes a confidence set for the true graphical model at the level \(P^{\star} = 1 - \alpha = 0.99\).  
The size of this confidence set is \(2^5 = 32\).

Applying the procedure \(DH\left(\frac{\alpha}{2}, \frac{\alpha}{2}\right)\) results in \(\overline{U}(x) = L(x) = \emptyset\), and  
\[
U(x) \setminus L(x) = \{(1,2); (1,3); (1,4); (2,3); (2,4); (3,4)\}.
\]  
The size of the confidence set \(S(x)\), constructed using the \(DH\left(\frac{\alpha}{2}, \frac{\alpha}{2}\right)\) procedure in this example, is \(2^6 = 64\).

\item $P^{\star}=0.9$. Then $\alpha=0.1$ $\frac{\alpha}{M}=0.017$, $1-\frac{\alpha}{M}=0.983$. Therefore, in accordance with the \(mB(\alpha)\) procedure, we obtain:    
$$L(x)=\emptyset$$
$$\overline{U}(x)=\{(1,2);(3,4)\}$$
$$U(x)\setminus L(x)=\{(1,3);(1,4);(2,3);(2,4)\}$$
The confidence set \(S(x)\) is schematically illustrated in Fig.\ref{Cork_boring}.
\begin{figure}[h!]
\centering
\includegraphics*[width=0.25\textwidth]{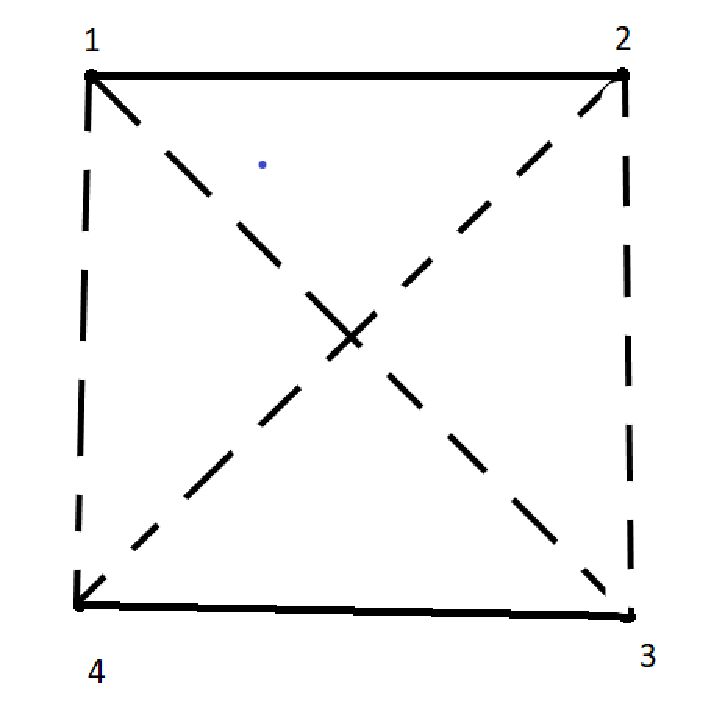}
\caption{Cork boring from \cite{Drton2004}, $N=4$. Solid lines correspond to significant edges, while dashed lines correspond to insignificant edges. $P^{\star}=0.9$
}\label{Cork_boring}
\end{figure}
The set of graphical models that can be obtained by replacing the dashed lines in the graph of Fig. \ref{Cork_boring} with either solid lines or no lines at all constitutes a confidence set for the true graphical model at the level $P^{\star}=1-\alpha=0.9$. The size of this confidence set is $2^4=16$.

Applying the procedure \(DH\left(\frac{\alpha}{2}, \frac{\alpha}{2}\right)\) leads to the same result.
\end{enumerate}

In \cite{Drton2004}, the p-values for hypotheses \(h_{i,j}:\rho^{i,j}=0\) are partitioned into groups \(S = (0.001; 0.01)\), \(I = (0.44)\), and \(N = (0.71; 0.9; 0.95)\). In our notation, the group \(S\) corresponds to the set of pairs \(\overline{U}\), group \(I\) corresponds to the set of pairs \(U \setminus L\), and group \(N\) corresponds to the set of pairs \(L\).  

This partition implies that hypotheses \(h_{i,j}:\rho^{i,j}=0\) with p-values 0.001 and 0.01, as well as the alternatives \(k_{i,j}:\rho^{i,j} \neq 0\) with p-values 0.29, 0.1, and 0.05, are rejected.  

Rejecting the alternative \(k_{i,j}:\rho^{i,j} \neq 0\) with p-value 0.29 using the \(mB(\alpha)\) procedure indicates that  
\[
0.29 = \frac{\alpha}{6} \rightarrow \alpha = 1.74,
\]
which exceeds any typical significance level \(\alpha \leq 1\). Therefore, such a rejection is not statistically significant at any standard significance level.  

Thus, this partitioning leads to \(P^{\star} = 0\). 


Let's consider the partition of p-values for hypotheses \(h_{i,j}:\rho^{i,j}=0\) into groups:  
- \(S = (0.001; 0.01)\),  
- \(I = (0.44; 0.71)\),  
- \(N = (0.9; 0.95)\).  

This partition implies that hypotheses \(h_{i,j}:\rho^{i,j}=0\) with p-values 0.001 and 0.01, as well as the alternatives \(k_{i,j}:\rho^{i,j}\neq 0\) with p-values 0.1 and 0.05, are rejected.  

Rejecting the hypothesis \(k_{i,j}:\rho^{i,j}\neq 0\) with p-value 0.1 using the \(mB(\alpha)\) procedure means that  
\[
0.1 = \frac{\alpha}{6} \rightarrow \alpha = 0.6,
\]
so the confidence level is  
\[
P^{\star} = 1 - \alpha = 0.4.
\]
This indicates that  
\[
P(L(X) \subset J_t(\theta) \subset U(X)) \geq 0.4,
\]
where \(L(x) = \{(2,3), (2,4)\}\),  
\(U(x) = \{(2,3), (2,4), (1,3), (1,4)\}\),  
and \(\overline{U}(x) = \{(1,2), (3,4)\}\).  

The size of the confidence set in this case is \(2^2=4\).

Let's consider the partition of p-values for hypotheses \(h_{i,j}:\rho^{i,j}=0\) into groups:  
- \(S = (0; 0.01)\),  
- \(I = (0.44; 0.71; 0.9)\),  
- \(N = (0.95)\).  

This partition implies that hypotheses \(h_{i,j}:\rho^{i,j}=0\) with p-values 0.001 and 0.01, as well as the alternative hypothesis \(k_{i,j}:\rho^{i,j}\neq 0\) with p-value 0.05, are rejected.  

Rejecting the hypothesis \(k_{i,j}:\rho^{i,j}\neq 0\) with p-value 0.05 using the \(mB(\alpha)\) procedure means that  
\[
0.05 = \frac{\alpha}{6} \rightarrow \alpha = 0.3,
\]
so the confidence level is  
\[
P^{\star} = 1 - \alpha = 0.7.
\]
This indicates that  
\[
P(L(X) \subset J_t(\theta) \subset U(X)) \geq 0.7,
\]
where:  
- \(L(x) = \{(2,4)\}\),  
- \(U(x) = \{(2,3), (2,4), (1,3), (1,4)\}\),  
- \(\overline{U}(x) = \{(1,2), (3,4)\}\).  

The size of the confidence set in this case is  
\[
2^3 = 8.
\]
 
\end{example}

\begin{example}
Let us consider example mathematical marks from \cite{Drton2004} where $N=5$, i.e. $M=10$.
Corresponding p-values $\hat{p}^{\varphi}_{i,j}$ of tests $\varphi_{i,j}$ for testing hypotheses $h_{i,j}:\rho^{i,j}=0$ are equal to:
$\hat{p}^{\varphi}_{1,2}=0.02;\hat{p}^{\varphi}_{1,3}=0.29;\hat{p}^{\varphi}_{1,4}=1.00;\hat{p}^{\varphi}_{1,5}=1.00;$
$\hat{p}^{\varphi}_{2,3}=0.095;\hat{p}^{\varphi}_{2,4}=1.00;\hat{p}^{\varphi}_{2,5}=1.00;\hat{p}^{\varphi}_{3,4}=0.00;$
$\hat{p}^{\varphi}_{3,5}=0.01;\hat{p}^{\varphi}_{4,5}=0.18;.$
For $\alpha=0.1$ $\frac{\alpha}{M}=0.01$, $1-\frac{\alpha}{M}=0.99$. Then 
$$U(x)=\{(1,2);(1,3);(1,4);(1,5);(2,3);(2,4);(2,5);(4,5)\}$$
$$\overline{U}(x)=\{(3,4);(3,5)\}$$
$$L(x)=\{(1,4);(1,5);(2,4);(2,5)\}$$
The conclusion regarding the corresponding graphical model is presented in Fig. \ref{Mathematics_marks}.\begin{figure}[h!]
\centering
\includegraphics*[width=0.5\textwidth]{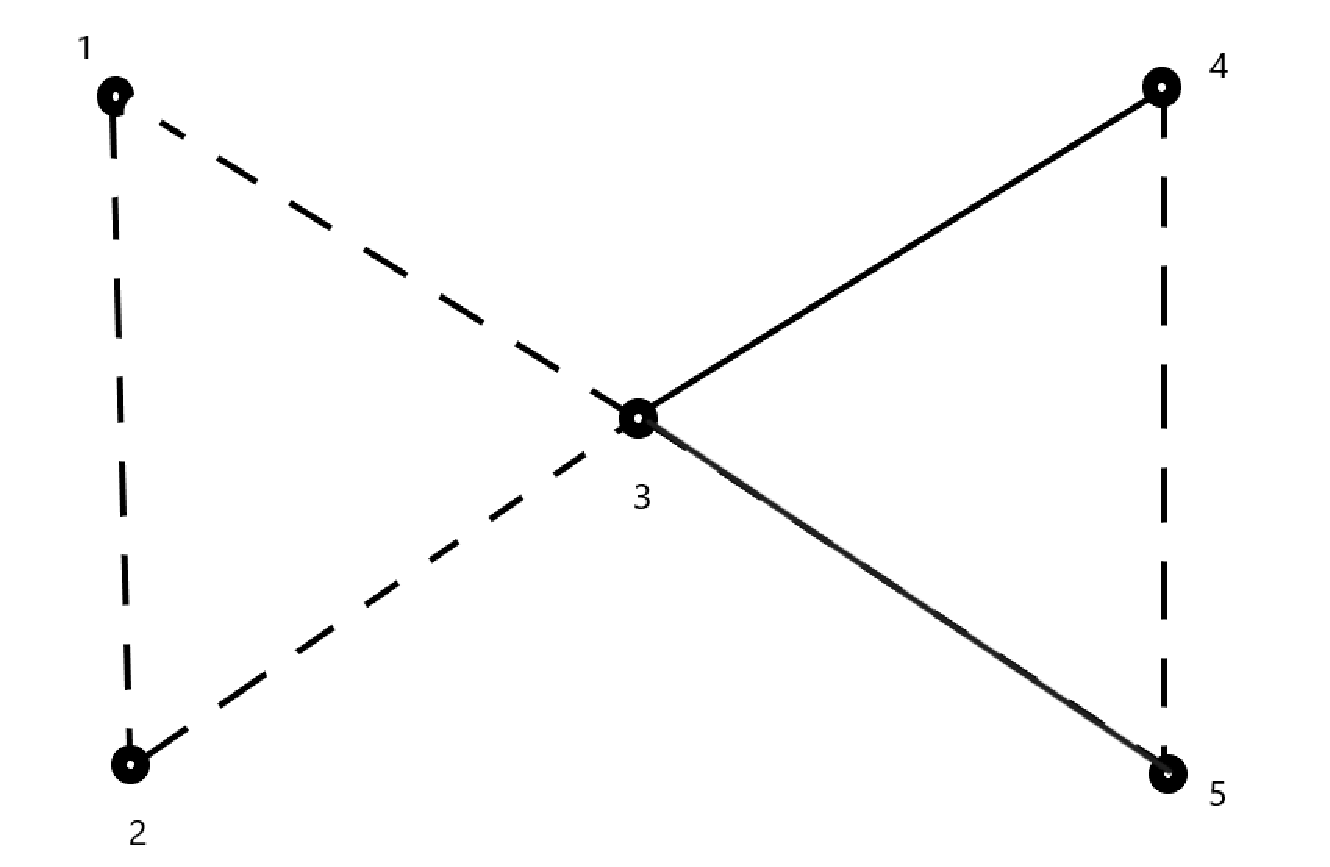}
\caption{Mathematics marks from \cite{Drton2004}, $N=5$. Solid lines correspond to significant edges, while dashed lines correspond to insignificant edges. $P^{\star}=0.9$ 
}\label{Mathematics_marks}
\end{figure}
The set of graphical models that can be obtained by replacing the dashed lines in the graph of Fig. \ref{Mathematics_marks} with solid lines or by removing lines represents a confidence set for the true graphical model at the level \(P^{\star} = 1 - \alpha = 0.9\). The size of this confidence set in this example is \(2^5 = 32\).

In \cite{Drton2004}, a partition of the p-values for hypotheses \(h_{i,j}:\rho^{i,j}=0\) was proposed into groups:  
- \(S = (0; 0.01; 0.02)\),  
- \(I = (0.09; 0.18; 0.29)\),  
- \(N = (1.00; 1.00; 1.00; 1.00)\).  

This partition indicates that hypotheses \(h_{i,j}:\rho^{i,j}=0\) with p-values 0, 0.01, and 0.02, as well as the alternatives \(k_{i,j}:\rho^{i,j}\neq 0\) with p-values 0.0, 0.0, 0.0, and 0.0, are rejected.  

Rejecting the hypothesis \(h_{i,j}:\rho^{i,j}=0\) with p-value 0.02 using the \(mB(\alpha)\) procedure means that  
\[
0.02 = \frac{\alpha}{10} \rightarrow \alpha=0.2,
\]
so the significance level is \(\alpha=0.2\). 
$P^{\star}=1-\alpha=0.8$ and $P(L(X)\subset J_t(\theta)\subset U(X))\geq 0.8$, where 
$$L(x)=\{(1,4);(1,5);(2,4);(2,5)\},U(x)=\{(1,4);(1,5);(2,4);(2,5);(1,3);(2,3);(4,5)\}$$
$$\overline{U}(x)=\{(3,4);(1,2);(3,5)\}$$ 
Thus, the partition of the inferences about the structure of the undirected Gaussian graphical model proposed in \cite{Drton2004} corresponds to a confidence level \(P^{\star} = 0.8\). Note that decreasing \(P^{\star}\) from 0.9 to 0.8 results in the addition of two statistically significant conclusions regarding the validity of the conditional independence hypotheses. At the same time, the set of statistically significant conclusions about the presence of edges remains unchanged.
\end{example}
\begin{example}
Let us consider example Fowl bones from \cite{Drton2004} where $N=6$, i.e. $M=15$.
Corresponding p-values $\hat{p}^{\varphi}_{i,j}$ of tests $\varphi_{i,j}$ for testing hypotheses $h_{i,j}:\rho^{i,j}=0$ are equal to:
$\hat{p}^{\varphi}_{1,2}=0.00;\hat{p}^{\varphi}_{1,3}=0.99;\hat{p}^{\varphi}_{1,4}=0.99;\hat{p}^{\varphi}_{1,5}=1.00;\hat{p}^{\varphi}_{1,6}=0.92$
$\hat{p}^{\varphi}_{2,3}=0.03;\hat{p}^{\varphi}_{2,4}=0.68;\hat{p}^{\varphi}_{2,5}=1.00;\hat{p}^{\varphi}_{2,6}=0.82;\hat{p}^{\varphi}_{5,6}=0.00;$
$\hat{p}^{\varphi}_{3,4}=0.00;\hat{p}^{\varphi}_{3,5}=0.07;\hat{p}^{\varphi}_{3,6}=0.98;\hat{p}^{\varphi}_{4,5}=0.59;\hat{p}^{\varphi}_{4,6}=0.00;$
For $\alpha=0.1$ $\frac{\alpha}{M}=0.0067$, $1-\frac{\alpha}{M}=0.993$. Therefore 
$$U(x)=\{(1,3);(1,4);(1,5);(1,6);(2,3);(2,4);(2,5);(2,6);(3,5);(3,6);(4,5)\}$$
$$\overline{U}(x)=\{(3,4);(1,2);(4,6);(5,6)\}$$ 
$$L(x)=\{(1,5);(2,5)\}$$
The conclusion regarding the corresponding graphical model is presented in Fig.\ref{Fowl_bones}.
\begin{figure}[h!]
\centering
\includegraphics*[width=0.5\textwidth]{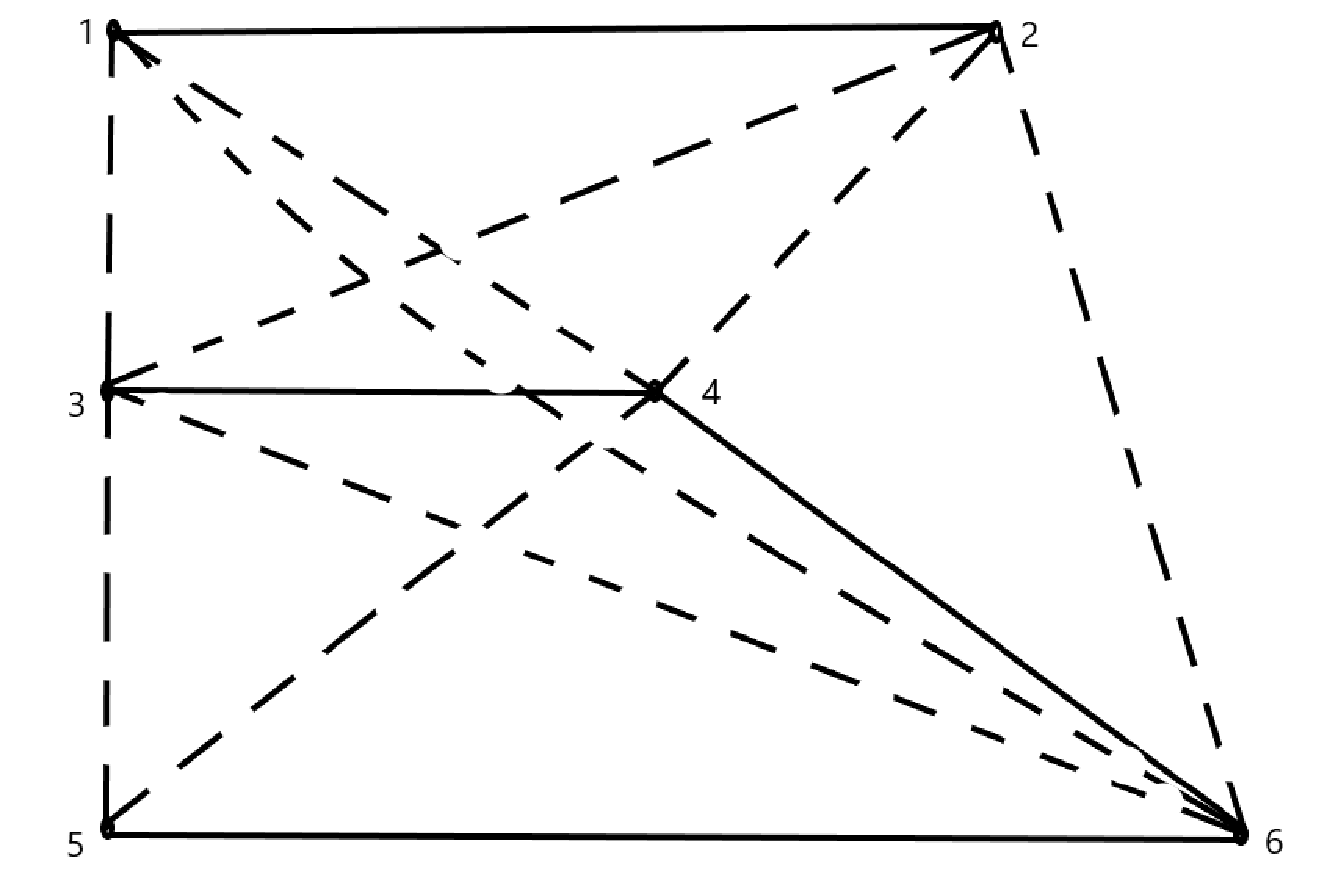}
\caption{Fowl bones from \cite{Drton2004}, $N=6$. Solid lines correspond to significant edges, while dashed lines correspond to insignificant edges. $P^{\star}=0.9$
}\label{Fowl_bones}
\end{figure}
The set of graphical models that can be obtained by replacing the dashed lines in the graph of Fig. \ref{Fowl_bones} with solid lines or by removing lines represents a confidence set for the true graphical model at the level $P^{\star}=1-\alpha=0.9$. The size of this confidence set in this example is $2^9=512$. 

In \cite{Drton2004}, a partition of the p-values for hypotheses \(h_{i,j}:\rho^{i,j}=0\) was proposed into groups:  
\[
S = (0; 0; 0; 0; 0.03), \quad I = (0.07), \quad N = (0.59; 0.68; 0.82; 0.92; 0.98; 0.99; 0.99; 1.00; 1.00).
\]
This partition indicates that hypotheses \(h_{i,j}:\rho^{i,j}=0\) with p-values \(0; 0; 0; 0; 0.03\), as well as the alternatives \(k_{i,j}:\rho^{i,j}\neq 0\) with p-values  
\(0.41,\, 0.32,\, 0.18,\, 0.08,\, 0.02,\, 0.01,\, 0.01,\, 0,\, 0\), are rejected.

Rejecting the alternative hypothesis \(k_{i,j}:\rho^{i,j}\neq 0\) with p-value \(0.41\) using the \(mB(\alpha)\) procedure means that  
\[
0.41 = \frac{\alpha}{15} \rightarrow \alpha=6.15,
\]
which exceeds any conventional significance level (\(\alpha \leq 1\)). Therefore, for the proposed partition, the confidence level is  
\[
P^{\star} = 0.
\]
Let us consider the partition of the p-values into groups 
$$S=(0; 0; 0; 0; 0.03), I=(0.07; 0,59; 0,68; 0,82; 0,92),$$
$$N=(0,98; 0,99; 0,99; 1,00; 1,00)$$ 
This partition indicates that hypotheses \(h_{i,j}:\rho^{i,j}=0\) with p-values \(0; 0; 0; 0; 0.03\), as well as the alternatives \(k_{i,j}:\rho^{i,j}\neq 0\) with p-values \(0.02; 0.01; 0.01; 0; 0\), are rejected.

Rejecting the hypothesis \(h_{i,j}:\rho^{i,j}=0\) with p-value \(0.03\) using the \(mB(\alpha)\) procedure implies that  
\[
0.03 = \frac{\alpha}{15} \rightarrow \alpha=0.45,
\]
and therefore the confidence level is  
\[
P^{\star} = 1 - \alpha = 0.55,
\]
which means that  
\[
P(L(X) \subset J_t(\theta) \subset U(X)) \geq 0.55.
\]
where \(L(X)\) and \(U(X)\) are 
$$L(x)=\{(1,3);(1,4);(1,5);(2,5);(3,6)\}$$
$$U(x)=\{(1,3);(1,4);(1,5);(1,6);(2,4);(2,5);(2,6);(3,5);(3,6); (4,5)\}$$
$$\overline{U}(x)=\{(1,2);(2,3);(3,4);(4,6);(5,6)\}$$
Let us consider the partition of the p-values into groups 
$$S=(0; 0; 0; 0), I=(0.03; 0.07; 0,59; 0,68; 0,82; 0,92),$$
$$N=(0,98; 0,99; 0,99; 1,00; 1,00)$$ 
This partition indicates that the hypotheses \(h_{i,j}:\rho^{i,j}=0\) with p-values \(0; 0; 0; 0\), as well as the alternatives \(k_{i,j}:\rho^{i,j}\neq 0\) with p-values \(0.02; 0.01; 0.01; 0; 0\), are rejected.
Rejecting the alternative hypothesis \(k_{i,j}:\rho^{i,j}\neq 0\) with p-value \(0.02\) using the \(mB(\alpha)\) procedure means that  
\[
0.02 = \frac{\alpha}{15} \rightarrow \alpha=0.3,
\]
and therefore, the confidence level is  
\[
P^{\star} = 1 - \alpha = 0.7,
\]
and 
\[
P(L(X) \subset J_t(\theta) \subset U(X)) \geq 0.7.
\]$$L(x)=\{(1,3);(1,4);(1,5);(2,5);(3,6)\}$$
$$U(x)=\{(1,3);(1,4);(1,5);(1,6);(2,3);(2,4);(2,5);(2,6);(3,5);(3,6); (4,5)\}$$
$$\overline{U}(x)=\{(1,2);(3,4);(4,6);(5,6)\}$$
\end{example}

\section{Conclusion}\label{conclusion}

In the present work, the approach to dividing inferences into statistically significant and non-significant, proposed in \cite{Koldanov_2023}, has been extended to address a broader class of problems related to the selection of graphical models. It is demonstrated how this development leads to the construction of confidence sets for graphical models. 

The equivalence between procedures for constructing confidence sets for graphical models and procedures for simultaneous testing of hypotheses and alternatives about the structure of these models has been established. It is shown that, under the condition of free combination of hypotheses and alternatives, a simple generalization of the closure method under broad conditions results in single-step procedures. 

The obtained results have been applied to the division of inferences into significant and uncertain ones in the problem of undirected Gaussian graphical model selection. Additionally, a response has been provided to the question of at which significance level one can trust the results obtained using the SINful approach, as presented in \cite{Drton2004} and \cite{Drton2008}.

\section{Appendix}\label{Appendix}
\subsection{Proof of lemma \ref{upper_set_lemma_new}}
\begin{proof}
If $J_t(\theta)=\emptyset$ or $U(x)=J$ then the lemma is obvious. Let $J_t(\theta)\neq\emptyset, U(x)\subset J$. 

Define set $\overline{U}(x)=\{i:\varphi_i(x)=1\},\ \ i=1,\ldots,M$ --- set of indices of rejected hypotheses $h_i$ by procedure $(\varphi_1,\varphi_2,\ldots,\varphi_M)$. It is evident 
$$U(x)\cup\overline{U}(x)=J, U(x)\cap\overline{U}(x)=\emptyset.$$
Since $J_t(\theta)\subseteq J$, then $J_t(\theta)\subseteq U(x)\cup\overline{U}(x)$.
Therefore
$$1=P_{\theta}\left(J_t(\theta)\subseteq U(X)\cup\overline{U}(X)\right)=P_{\theta}(J_t(\theta)\subseteq U(X))+$$
$$+P_{\theta}\left(\bigcup_{\pi}\left(A_{\pi}\subseteq U(X)\right)\cap\left(B_{\pi}\subseteq\overline{U}(X)\right)\right)+P_{\theta}\left(J_t(\theta)\subseteq\overline{U}(X)\right)$$
where 

$\pi$ is a set of partitions $J_t(\theta)$ by two nonempty sets $A_{\pi}$ and $B_{\pi}$,

$A_{\pi}$ is a set of indices of true hypotheses accepted by procedure $(\varphi_1,\varphi_2,\ldots,\varphi_M)$,

$B_{\pi}$ is a set of indices of true hypotheses rejected by procedure $(\varphi_1,\varphi_2,\ldots,\varphi_M)$.

$$P_{\theta}\left(\bigcup_{\pi}\left(A_{\pi}\subseteq U(X)\right)\cap\left(B_{\pi}\subseteq\overline{U}(X)\right)\right)+P_{\theta}\left(J_t(\theta)\subseteq\overline{U}(X)\right)\leq$$
$$\leq P_{\theta}\left(\bigcup_{\pi}\left(B_{\pi}\subseteq\overline{U}(X)\right)\right)+P_{\theta}\left(J_t(\theta)\subseteq\overline{U}(X)\right)$$

Since procedure $(\varphi_1,\varphi_2,\ldots,\varphi_M)$ for testing hypotheses $h_i,\ \ i=1,\ldots,M$ has to control $FWER\leq\alpha$ then 
$$P_{\theta}\left(\bigcup_{\pi}\left(B_{\pi}\subseteq\overline{U}(X)\right)\right)+P_{\theta}\left(J_t(\theta)\subseteq\overline{U}(X)\right)\leq\alpha$$
Therefore
$$1\leq P_{\theta}(J_t(\theta)\subseteq U(X))+\alpha$$
or
$$P_{\theta}(U(X)\supseteq J_t(\theta))\geq 1-\alpha, \ \ \forall \theta\in\Theta.$$
 \end{proof}

\subsection{Proof of the lemma \ref{low_set_lemma_new}}
\begin{proof}
If $J_t(\theta)=\emptyset$ then (\ref{lower_set_lemma_new_equation}) means that 
$$P_{\theta}(L(X)=\emptyset)\geq 1-\alpha.$$
This is correct since procedure $(\psi_1,\psi_2,\ldots,\psi_M)$
control FWER in strong sense i.e. under condition that all alternatives are true.
Let $J_t(\theta)\neq\emptyset.$

$$P_{\theta}(L(X)\subseteq J_t(\theta))=P_{\theta}(L(X)\subseteq J_t(\theta)|L(X)=\emptyset)P_{\theta}(L(X)=\emptyset)+$$
$$+P_{\theta}(L(X)\subseteq J_t(\theta)|L(X)\neq\emptyset)P_{\theta}(L(X)\neq\emptyset)=$$
$$=P_{\theta}(L(X)=\emptyset)+P_{\theta}(L(X)\subseteq J_t(\theta)|L(X)\neq\emptyset)P_{\theta}(L(X)\neq\emptyset)$$
since 
$$P_{\theta}(L(X)\subseteq J_t(\theta)|L(X)=\emptyset)=1 \ \ \forall\theta\in\Theta.$$

Let's show that the inequality below holds under the conditions of the lemma.
\begin{equation}\label{lower_set_lemma_new_equation_sence_of_the_lemma}
P_{\theta}(L(X)\subseteq J_t(\theta)|L(X)\neq\emptyset)\geq 1-\alpha \ \ \forall\theta\in\Theta.
\end{equation}

It is evident from definitions of the sets $L(x), J$ 
$$P_{\theta}(L(X)\subseteq J|L(x)\neq\emptyset)=1$$
Therefore
$$1=P_{\theta}(L(X)\subseteq J|L(X)\neq\emptyset)=P_{\theta}(L(X)\subseteq J_t(\theta)\cup J_f(\theta)|L(X)\neq\emptyset)=$$
$$=P_{\theta}(L(X)\subseteq J_t(\theta)|L(X)\neq\emptyset)+P_{\theta}\left(\bigcup_{\pi}\left[(A_{\pi}\subseteq J_t(\theta))\cap(B_{\pi}\subseteq J_f(\theta))\right]|L(X)\neq\emptyset\right)+$$
$$+P_{\theta}(L(x)\subseteq J_f(\theta)|L(X)\neq\emptyset)$$
where
$\pi=\pi(x)$ is a set of partitions of $L(x)$ by two sets: $A_{\pi}$ - the set of indices of correctly rejected alternative hypotheses \(k_i\) (included in the set of indices of true hypotheses \(h_i\)), and \(B_{\pi}\) the set of indices of incorrectly rejected alternative hypotheses \(k_i\) (included in the set of indices of false hypotheses \(h_i\)).
$$P_{\theta}(L(X)\subseteq J_t(\theta)|L(X)\neq\emptyset)+P_{\theta}\left(\bigcup_{\pi}\left[(A_{\pi}\subseteq J_t(\theta))\cap(B_{\pi}\subseteq J_f(\theta))\right]|L(X)\neq\emptyset\right)+$$
$$+P_{\theta}(L(X)\subseteq J_f(\theta)|L(X)\neq\emptyset)\leq$$
$$\leq P_{\theta}(L(X)\subseteq J_t(\theta)|L(X)\neq\emptyset)+P_{\theta}\left(\bigcup_{\pi}B_{\pi}\subseteq J_f(\theta)|L(X)\neq\emptyset\right)+$$
$$+P_{\theta}(L(X)\subseteq J_f(\theta)|L(X)\neq\emptyset)$$
Since the procedure \(\psi = (\psi_1, \ldots, \psi_M)\) for simultaneous testing of the alternative hypotheses \(k_i\) controls the family-wise error rate \(FWER \leq \alpha\) in the strong sense, then
$$P_{\theta}\left(\bigcup_{\pi}B_{\pi}\subseteq J_f(\theta)|L(X)\neq\emptyset\right)+P_{\theta}(L(X)\subseteq J_f(\theta)|L(X)\neq\emptyset)\leq\alpha.$$
Therefore
$$1\leq P_{\theta}(L(X)\subseteq J_t(\theta)|L(X)\neq\emptyset)+\alpha,$$
or
$$P_{\theta}(L(X)\subseteq J_t(\theta)|L(X)\neq\emptyset)\geq 1-\alpha.$$
Therefore
$$P_{\theta}(L(X)\subseteq J_t(\theta))=P_{\theta}(L(X)=\emptyset)+P_{\theta}(L(X)\subseteq J_t(\theta)|L(X)\neq\emptyset)P_{\theta}(L(X)\neq\emptyset)\geq $$
$$\geq (1-\alpha)P_{\theta}(L(X)=\emptyset)+(1-\alpha)P_{\theta}(L(X)\neq\emptyset)=1-\alpha$$
 
\end{proof}
\subsection{Proof of the theorem \ref{theorem_for_upper_low_bounds_construction}}
\begin{proof}
If $J_t(\theta)=J$ then (\ref{simult_upper_low_bounds_new_theorem}) follows from lemma \ref{upper_set_lemma_new}.

If $J_t(\theta)=\emptyset$ then (\ref{simult_upper_low_bounds_new_theorem}) follows from lemma \ref{low_set_lemma_new}.

Suppose $J_t(\theta)\neq\emptyset, J_t(\theta)\subset J$.

Let's show that the condition (\ref{comp_condition_new}) guarantees that for all \( x \), the set \( L(x) \) is a subset of \( U(x) \).
 
Since
$$L(x)=\{i:\psi_i=1\}=\{i:\varphi_i=0,\psi_i=1\}\cup\{i:\varphi_i=1,\psi_i=1\}$$
$$U(x)=\{i:\varphi_i=0\}=\{i:\varphi_i=0,\psi_i=0\}\cup\{i:\varphi_i=0,\psi_i=1\}$$
then from (\ref{comp_condition_new}) one obtain
\begin{equation}\label{low_bound_general_form}
L(x)=\{i:\varphi_i=0,\psi_i=1\}
\end{equation}
\begin{equation}\label{upper_bound_general_form}
U(x)=\{i:\varphi_i=0,\psi_i=0\}\cup\{i:\varphi_i=0,\psi_i=1\}
\end{equation}
and
$$\forall x\ \ L(x)\subseteq U(x).$$
Therefore, the expression \( L(x) \subseteq J_t(\theta) \subseteq U(x) \) is equivalent to the simultaneous fulfillment of two conditions
$L(x)\subseteq J_t(\theta)$, $J_t(\theta)\subseteq U(x)$. 

Since the procedure \(\delta = (\varphi_1, \psi_1, \varphi_2, \psi_2, \ldots, \varphi_M, \psi_M)\) for simultaneous testing of hypotheses \(h_1, k_1, h_2, k_2, \ldots, h_M, k_M\) controls the family-wise error rate \(FWER \leq \alpha\) in the strong sense, then
\begin{equation}\label{meaning_delta_control_FWER}
P_{\theta}\left(\left(\bigcup_{i\in J_t(\theta)}\{\varphi_i(X)=1\}\right)\cup\left(\bigcup_{i\in J_f(\theta)}\{\psi_i(X)=1\}\right)\right)\leq\alpha, \ \ \forall \theta\in\Theta
\end{equation}
The event \(\left(\bigcup_{i \in J_t(\theta)} \{\varphi_i(X) = 1\}\right)\) means that at least one true hypothesis is rejected, i.e., that
$\exists k:k\in J_t(\theta)\cap \overline{U}$.

The event $\left(\bigcup_{i\in J_f(\theta)}\{\psi_i(X)=1\}\right)$ means that at least one true alternative is rejected, i.e. 
$\exists j:j\in L\cap J_f(\theta)$.

Therefore (\ref{meaning_delta_control_FWER}) is equivalent to 
$$P_{\theta}(\{\exists j\in L\cap J_f(\theta)\}\cup\{\exists k\in J_t(\theta)\cap\overline{U}\})\leq\alpha$$
Since $\alpha=1-P^{\star}$ then
$$P_{\theta}(L\subseteq J_t(\theta)\subseteq U)=1-P_{\theta}(\{L\not\subseteq J_t(\theta)\}\cup\{J_t(\theta)\not\subseteq U\})=$$
$$=1-P_{\theta}(\{\exists j\in L\cap J_f(\theta)\}\cup\{\exists k\in J_t(\theta)\cap\overline{U}\})\geq P^{\star}$$
\end{proof}                         
\subsection{Proof of the theorem \ref{equivalence_mult_proc_suff_set}}
\begin{proof}
Let us denote \(\delta'\) as the procedure for constructing \(L(x)\) and \(U(x)\), such that
$L(x)\subseteq U(x)$ and the condition below is satisfied
$$P_{\theta}(L\subset J_t(\theta)\subset U)\geq P^{\star}, \ \ \forall \theta\in\Theta$$
Since $L(x)\subseteq U(x)$, then the latter is equivalent to
$$
P_{\theta}(L(x)\subseteq J_t(\theta)\mbox{ and } J_t(\theta)\subseteq U(x))\geq P^{\star}
$$
or
$$
P_{\theta}(L(x)\not\subseteq J_t(\theta)\mbox{ or } J_t(\theta)\not\subseteq U(x))\leq 1-P^{\star}=\alpha
$$
or
$$P_{\theta}(\{\exists i\in L\cap J_f(\theta)\}\cup\{\exists j\in J_t(\theta)\cap\overline{U}\})\leq\alpha$$
The latter implies that the probability that at least one index from \(L(x)\) is an index of a false hypothesis, or at least one index from \(\overline{U}(x)\) is an index of a true hypothesis, is bounded by the value \(\alpha\). Consequently, the procedure \(\delta'\) is a simultaneous testing procedure for hypotheses \(h_1, \ldots, h_M\) and alternatives \(k_1, \ldots, k_M\), which controls the family-wise error rate (FWER) at level \(\alpha = 1 - P^{\star}\) in the strong sense.
\end{proof}
\subsection{Generalized Closure Method}
\begin{teo}\label{generalization_closure_principle_FWER_control_theorem}
If, in the closure of the original family of hypotheses \(\{h_1, \ldots, h_K\}\), some intersection hypotheses \(H_Q = \bigcap_{i \in Q} h_i\) are empty (\(H_Q = \bigcap_{i \in Q} h_i = \emptyset\)), then the procedure for simultaneous testing of hypotheses \(h_1, \ldots, h_K\), constructed via the generalized closure method, controls the probability of at least one erroneous rejection at level \(\alpha\), i.e., \(FWER \leq \alpha\), for any number of true hypotheses.
\end{teo}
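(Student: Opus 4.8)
The plan is to adapt the classical FWER-control argument for the closed testing procedure of Marcus, Peritz and Gabriel \cite{Marcus1976}, and to verify that the only modification introduced here --- forcing $\varphi_Q\equiv 1$ whenever the parametric region $H_Q=\bigcap_{i\in Q}h_i$ is empty --- never affects the single test on which that argument depends.

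First I would fix an arbitrary true parameter $\theta\in\Theta$ and let $T=T(\theta)=\{i:\theta\in h_i\}$ denote the index set of true hypotheses under $\theta$ (those whose parametric region contains $\theta$). If $T=\emptyset$ then no true hypothesis can be rejected and $FWER=0$, so I may assume $T\neq\emptyset$. The decisive observation is that $\theta\in\bigcap_{i\in T}h_i=H_T$, so $H_T\neq\emptyset$; consequently the generalized closure method does \emph{not} replace $\varphi_T$ by the constant $1$, and $\varphi_T$ remains a genuine level-$\alpha$ test of the (true) intersection hypothesis $H_T$, whence $P_\theta(\varphi_T=1)\leq\alpha$.

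Next I would recall the rejection rule of the (generalized) closed procedure: the elementary hypothesis $h_i$ is rejected if and only if $\varphi_Q=1$ for every index set $Q$ with $i\in Q$. A type~I error occurs precisely when some $h_i$ with $i\in T$ is rejected. If such an $i$ is rejected, then in particular $\varphi_Q=1$ for the admissible choice $Q=T$ (admissible since $i\in T$), i.e. $\varphi_T=1$. Hence the event ``at least one true hypothesis is rejected'' is contained in $\{\varphi_T=1\}$, and therefore
$$FWER=P_\theta(\mbox{at least one true } h_i \mbox{ rejected})\leq P_\theta(\varphi_T=1)\leq\alpha.$$
Since $\theta$ was arbitrary, this bound holds for every configuration of true hypotheses, which is exactly the assertion of strong FWER control.

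The step I expect to require the most care is the verification underlying the second paragraph: that the intersection of the true hypotheses is always nonempty and therefore escapes the ``set $\varphi_Q\equiv 1$'' convention. This is precisely what makes the generalization harmless --- the empty intersection hypotheses that are forced to reject can only enlarge the set of rejections, but they are by construction false, so an automatic rejection of an empty $H_Q$ is never itself a type~I error and can never force the true hypothesis $H_T$ to be rejected. Once this observation is in place, the remainder is the standard closed-testing bound and involves no further computation.
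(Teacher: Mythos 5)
Your proposal is correct and follows essentially the same argument as the paper: both identify the intersection of the true hypotheses (your $H_T$, the paper's $H_Q$) as nonempty---hence untouched by the ``set $\varphi_Q\equiv 1$'' convention---and then apply the standard closed-testing containment, namely that rejecting any true $h_i$ forces $\varphi_T=1$, giving $FWER\leq P_\theta(\varphi_T=1)\leq\alpha$. Your explicit treatment of the case $T=\emptyset$ and the remark that forced rejections of empty intersections are never type~I errors are minor refinements of the paper's terser wording, not a different route.
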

\begin{proof}
Let hypotheses \(h_i, i \in Q, Q \subset \{1, 2, \ldots, K\}\), be true, and let \(H_Q = \bigcap_{i \in Q} h_i\) be the corresponding intersection hypothesis. Then, the hypothesis \(H_Q\) is obviously non-empty and true.
The event \(A = \{\mbox{rejecting at least one true hypothesis } h_i\}\) can be written as:

\[
A = \{\varphi_{1,2,\ldots,K} = 1\} \cap \ldots \cap \{\varphi_{i_1,i_2,\ldots,i_m} = 1\} \cap \ldots \cap \{\varphi_Q=1\} \cap \ldots \cap \left( \bigcup_{i \in Q} \{\varphi_i=1\} \right),
\]

where \( \{i_1, i_2, \ldots, i_m\} \subseteq \{1, 2, \ldots, K\} \).
If for some set \( \{i_1, i_2, \ldots, i_m\} \), the corresponding intersection hypothesis

\[
H_{i_1,i_2,\ldots,i_m} = \bigcap_{i \in \{i_1,i_2,\ldots,i_m\}} h_i = \emptyset,
\]

then, according to the generalized closure method,

\[
\varphi_{i_1,i_2,\ldots,i_m} = 1.
\]

The event \(A = \{\mbox{rejecting at least one true hypothesis } h_i\}\) is contained within the event "rejecting \(H_Q\) ", i.e.,

\[
A \subseteq \{\varphi_Q=1\}.
\]

Therefore,

\[
P(A) \leq P(\varphi_Q=1).
\]

Since all hypotheses are tested at level \(\alpha\), meaning

\[
P_{H_Q}(\varphi_Q=1) = \alpha,
\]

it follows that

\[
P(A) \leq P(\varphi_Q=1) = \alpha,
\]

and thus the family-wise error rate (FWER) satisfies

\[
FWER = P(\mbox{rejecting at least one true hypothesis}) \leq \alpha.
\]
\end{proof}

\subsection{Proof of the theorem \ref{closure_equivalent_bonferroni_theorem}}\label{proof_closure_equivalent_bonferroni_theorem}
\begin{proof}
Without loss of generality consider $h_1$. 
The hypothesis \(h_1\) is rejected by the generalized closure method if and only if all non-empty hypotheses \(H_{J_1,J_2}\), such that \(\{1\} \subseteq J_1\), are rejected. Let \(J_1 = \{1, i_2, \ldots, i_{|J_1|}\}\), \(J_2 = \{j_1, \ldots, j_{|J_2|}\}\), with \(|J_1 \cup J_2| = L\) and \(J_1 \cap J_2 = \emptyset\). Clearly, \(1 \leq L \leq M\).

In accordance with condition 2 of the theorem, the tests for evaluating \(H_{J_1,J_2}\) are union-intersection tests, which have the form \cite{Hochberg_1987}:
$$
\delta_{J_1,J_2}=\left\{\
									\begin{array}{ll}
									1,& \min_{i\in J_1,j\in J_2}(\hat{p}_{h_i},\hat{p}_{k_j})<\alpha(L)\\
									0,&\mbox{ otherwise }
									\end{array}
								 \right.
$$
where $\alpha(L)$ is defined from
\begin{equation}\label{union_intersection_test_hypotheses_alternative_L}
\begin{array}{l}
P_{\theta}(\min_{i\in J_1,j\in J_2}(\hat{p}_{h_i},\hat{p}_{k_j})<\alpha(L))=\alpha\\
\end{array}
\end{equation}
$$\theta=(\theta_1,\theta_{i_2},\ldots,\theta_{i_{|J_1|}},\theta_{j_1},\ldots,\theta_{j_{|J_2|}}),\theta_i\in [ \Theta_i ] \cap [\Theta_i^c ] , i\in J_1\cup J_2,$$ 
$[ \Theta_i ] \mbox{ - closure of the set }\Theta_i.$
Since $\alpha(L)$ does not depend from $J_1$ then (\ref{union_intersection_test_hypotheses_alternative_L}) is equivalent to
\begin{equation}\label{union_intersection_test_hypotheses_alternative_L_equivalent}
P_{\theta}(\min_{i\in J_1}(\hat{p}_{h_i})<\alpha(L))=\alpha
\end{equation}
where $|J_1|=L, J_2=\emptyset, \theta=(\theta_1,\theta_{i_2},\ldots,\theta_{i_{L}})$.


For \(L > 1\) and any index \(m \in \{2, \ldots, M\}\), the set of hypotheses \(H_{J_1,J_2} : \{1\} \subset J_1\) includes both intersection hypotheses containing \(h_m\) and intersection hypotheses containing \(k_m\), where \(m \neq 1\), and \(m \in J_1 \cup J_2\), with \(J_1 \cap J_2 = \emptyset\). 

According to condition 3 of the theorem, the p-values \(\hat{p}_{h_m}\) and \(\hat{p}_{k_m}\) cannot both be less than \(\alpha(L)\) simultaneously. 

Therefore, all hypotheses \(H_{J_1,J_2}\) such that \(\{1\} \subset J_1\) and \(|J_1 \cup J_2| = L\) can be rejected simultaneously if and only if

\[
\hat{p}_{h_1} < \alpha(L).
\]

From equation (\ref{union_intersection_test_hypotheses_alternative_L_equivalent}), it is evident that \(\alpha(L)\) decreases as \(L\) increases. Since \(L \leq M\), the minimum value of \(\alpha(L)\) over all \(L\) is

\[
\min_L \alpha(L) = \alpha(M).
\]
Therefore, the test for \(h_1\), derived from the generalized closure method, is given by:
$$\varphi_1(x)=\left\{\
									\begin{array}{ll}
									1,& \hat{p}_{h_1}<\alpha(M)\\
									0,& \hat{p}_{h_1}\geq\alpha(M)
									\end{array}
							 \right.,
$$
Therefore, the procedure for simultaneous testing of hypotheses \(h_1, \ldots, h_M\) and alternatives \(k_1, \ldots, k_M\) is a single-step procedure and has the form:$$\left(\varphi_1,\ldots,\varphi_M,\psi_1(x),\ldots,\psi_M(x)\right)$$
where 
$$\varphi_i(x)=\left\{\
									\begin{array}{ll}
									1,& \hat{p}_{h_i}<\alpha(M)\\
									0,& \hat{p}_{h_i}\geq\alpha(M)
									\end{array}
							 \right.,
$$
$$\psi_i(x)=\left\{\
									\begin{array}{ll}
									1,& \hat{p}_{k_i}<\alpha(M)\\
									0,& \hat{p}_{k_i}\geq\alpha(M)
									\end{array}
							 \right.,
$$
\begin{equation}\label{union_intersection_test_hypotheses_alternative_M_equivalent}
P_{\theta}(\min_{i\in\{1,\ldots,M\}}(\hat{p}_{h_i})<\alpha(M))=\alpha
\end{equation}

\end{proof}

\end{document}